\title{Constant Inapproximability of Pacing Equilibria\\ in Second-Price Auctions}
\author{
Xi Chen\footnote{Supported by NSF grants IIS-1838154, CCF-2106429 and CCF-2107187.}\\ Columbia University\\\url{xichen@cs.columbia.edu}\hspace{-0.2cm}
\and Yuhao Li\footnote{Supported by NSF grants IIS-1838154, CCF-2106429 and CCF-2107187.}\\Columbia University\\\url{yuhaoli@cs.columbia.edu}\hspace{-0.2cm}
}
\date{}
\begin{document}
\maketitle
\begin{abstract}
In this paper, we revisit the problem of approximating a pacing equilibrium in second-price auctions, introduced by Conitzer, Kroer, Sodomka, and Moses [Oper. Res. 2022]. We~show~that finding a constant-factor approximation of a pacing equilibrium is $\PPAD$-hard, thereby strengthening previous results of Chen, Kroer, and Kumar [Math. Oper. Res. 2024], which established $\PPAD$-hardness only for inverse-polynomial approximations.
\end{abstract}

\section{Introduction}

Online advertisements have become a cornerstone of Internet economies, enabling advertisers to reach targeted audiences with remarkable precision. Auction-based mechanisms are at the heart of this ecosystem, determining the allocation and pricing of ad placements. Advertisers, on the other hand, who typically participate in thousands of auctions daily, must navigate budget constraints by adopting strategies that balance immediate bidding opportunities with long-term efficiency. 

One widely adopted method for budget management is \emph{pacing}, a strategy employed by major platforms like Google~\cite{google_ads_budget_pacing} to ensure smooth and efficient budget utilization over the course of an advertising campaign. By associating a pacing multiplier \hspace{0.05cm}---\hspace{0.05cm} a value between zero and one \hspace{0.05cm}---\hspace{0.05cm} with each advertiser, its bids are multiplicatively scaled down to prevent premature budget exhaustion while maximizing participation in later, potentially more valuable auctions. Besides its simplicity, pacing enjoys the property that allows a buyer to win items that provide the best return on investment (ratio of value to price) subject to their budget constraint.

In this paper, we examine pacing in second-price auctions, where bidding truthfully is a dominant strategy in the absence of budget constraints. However, in most cases, bidders face budget constraints, and they may adopt pacing strategies to manage their expenditures. In scenarios where all bidders employ pacing strategies, each bidder's optimal pacing multiplier is influenced by others' multipliers. The predicted outcomes by game theorists are often characterized as the stable points, or, equilibria, where every bidder's pacing multiplier is optimal given those of other bidders. This was formalized by Conitzer, Kroer, Sodomka, and Moses \cite{CKSM22} as a pacing equilibrium, where they showed such an equilibrium always exists in any second-price pacing game, whereas leaving the complexity of computing a pacing equilibrium as an open problem. 

In \cite{chen2024complexity},
Chen, Kroer, and Kumar investigated the computational complexity of pacing equilibria in second-price pacing games. On the positive side, they showed that computing an exact pacing equilibrium is in the complexity class $\PPAD$, thus implying that there always exists an exact rational pacing equilibrium. On the negative side, however, they showed that it is $\PPAD$-hard to compute an approximate pacing equilibrium when the approximation precision is inversely polynomial in the size of the market.

\paragraph{Our Contribution.} We revisit the approximability of pacing equilibria in second-price pacing games, and provide the first $\PPAD$-hardness for \emph{constant approximation}. Hence, a polynomial-time approximation scheme (PTAS) for pacing equilibria cannot exist unless $\PPAD = \FP$.

\begin{restatable}{theorem}{theoremmain}\label{theorem: main}
	For any constant $\gamma<1/3$, the problem of computing a $\gamma$-approximate pacing equilibrium in a second-price pacing game is $\PPAD$-hard. This holds even when each bidder bids on at most four items.
\end{restatable}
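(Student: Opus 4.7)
The plan is to reduce from a $\PPAD$-complete problem that is already hard for some absolute constant error tolerance, rather than try to amplify the Chen--Kroer--Kumar construction (unlike NP-hard problems, search problems in $\PPAD$ do not admit black-box gap amplification of inverse-polynomial hardness). The natural source is the \textsc{Pure-Circuit} problem of Deligkas, Fearnley, Hollender, and Melissourgos, which is $\PPAD$-complete even for a fixed constant error parameter and whose three-valued $\{0,1,\bot\}$ semantics is tailor-made so that errors are confined to individual $\bot$-labeled wires and do not accumulate across gates.

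First I would build a family of pacing gadgets, one per Pure-Circuit gate type (such as \textsc{Not}, \textsc{Or}, and \textsc{Purify}), subject to the restriction that each bidder participates in at most four items. Each wire of the circuit is represented by the pacing multiplier $\alpha$ of a designated bidder, with the convention that $\alpha$ close to $1$ encodes value $1$, $\alpha$ close to $0$ encodes value $0$, and any intermediate $\alpha$ encodes $\bot$. The pacing-equilibrium condition (either $\alpha_i = 1$, or bidder $i$ fully spends the budget) is naturally bistable and meshes well with this ternary semantics: by tuning budgets and valuations, I would arrange that the two stable branches of each gadget-bidder correspond to the two boolean wire values, leaving the intermediate region as a narrow ``collapse'' zone that must be decoded as $\bot$.

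The quantitative analysis of these gadgets is the main technical step and likely the source of the $1/3$ threshold. The threshold plausibly arises from a gate in which three slacks \hspace{0.05cm}---\hspace{0.05cm} the approximation slack, an underspending slack on the won items, and a pacing-room slack on the lost items \hspace{0.05cm}---\hspace{0.05cm} must collectively sum to at most one, so that as long as $\gamma < 1/3$ at least one of the remaining two slacks is large enough to force the gate's logical output; once $\gamma \ge 1/3$ all three can conspire to produce a spurious equilibrium. A second obstacle is the four-items-per-bidder restriction: natural pacing constructions use high-degree bidders to implement ``sum'' or ``balance'' nodes, whereas here I would have to decompose any such node into a small tree of auxiliary bidders of degree at most four that forward the multiplier value along through tightly coupled pairs of items.

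Finally, composition of the gadgets follows the circuit topology by identifying output wires of one gadget with input wires of the next; correctness is inherited from the local gadget analysis together with the Pure-Circuit guarantee that $\bot$-labeled wires cannot propagate errors further. Any $\gamma$-approximate pacing equilibrium of the constructed game, for $\gamma < 1/3$, then decodes into a valid Pure-Circuit assignment, yielding the claimed $\PPAD$-hardness even under the bounded-degree constraint.
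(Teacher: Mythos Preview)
Your high-level plan---reduce from \textsc{Pure-Circuit} and encode each wire by a bidder's pacing multiplier---is exactly the paper's approach. But the proposal is missing the one idea the paper itself flags as ``simple but crucial'': an auxiliary rich buyer paired with each variable-buyer $b_v$ on a dedicated good, which forces $\alpha_{b_v}\ge\kappa$ for some fixed constant $\kappa>0$ in every $\gamma$-approximate equilibrium. This is what makes the gadgets work. Your convention that ``$\alpha$ close to $0$'' encodes the Boolean $0$ runs into a circularity: the only way a gate's output buyer can \emph{read} an input multiplier is by winning a signal good on which the input buyer also bids and paying the input's bid as the second price, but without a uniform lower bound on multipliers you cannot guarantee the output buyer outbids the input buyer. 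The paper instead encodes $0$ by $\alpha=\kappa$ and $1$ by $\alpha=1$, and the auxiliary-buyer trick supplies that lower bound unconditionally. A second consequence you would discover is that the natural gadget direction is \emph{negating}: higher input $\alpha$ means more spending for the output buyer, which pushes its multiplier down. Hence the paper works with $\{\textsc{Not},\textsc{Nor},\textsc{NPurify}\}$ rather than $\{\textsc{Not},\textsc{Or},\textsc{Purify}\}$.

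Two smaller points. Your worry about the four-item restriction is misplaced: \textsc{Pure-Circuit} is already $\PPAD$-hard when every node has total degree at most $3$, so each $b_v$ touches at most three inter-gate goods plus its one auxiliary good $g_v$, giving four automatically---no tree decomposition is needed. And the $1/3$ threshold does not come from a ``three slacks summing to one'' story; it falls out of the arithmetic of the $\textsc{Nor}$ gadget, where with budget $3$ and parameters $\delta=1/3-\gamma$, $\kappa=3\delta/2$, the ``underspending'' check $(2\kappa+2-\delta)/3<1-\gamma$ and the ``overspending'' check $1+\kappa+(2-\delta)>3$ are simultaneously satisfiable precisely when $\gamma<1/3$.
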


We give the definition of approximate pacing equilibria in \Cref{sec:preliminaries}, and
  prove \Cref{theorem: main} in \Cref{section: proof of main theorem}.
In fact, we also manage to establish $\PPAD$-hardness for a~weaker notion of approximate pacing equilibria defined in \cite{chen2024complexity}, which relaxes more constraints in the definition. We present its definition and the corresponding constant inapproximability result in \Cref{section: variant chen2024complexity}.

\subsection{Technical Overview}
For precise definitions of pacing equilibria and related concepts, we refer the reader to \Cref{sec:preliminaries}.

The insight of the whole reduction starts from a simple but crucial trick that can guarantee a lower bound for the pacing multipliers (that we care about) in any pacing equilibrium. To show the insight, suppose that there is a market where the budget of every buyer is at most 10. We are going to make a local change in this market to make sure that $\alpha_b\geq 1/2$, where $b$ is some buyer, in any pacing equilibrium.
This is achieved by introducing an auxiliary (rich) buyer $c$ and an auxiliary good $g$, where the budget of $c$ is a large constant (say 1000), and only $b$ and $c$ are interested in this good, and $c$ is only interested in the good $g$. We set $v_{b}(g)=200$, $v_{c}(g)=100$, and the budget of $b$ to be its original budget plus 100. By the ``not too much unnecessary pacing'' condition in the definition of approximate pacing equilibrium, it is easy to check that $\alpha_b\geq 1/2$ in any pacing equilibrium in this newly modified auction market. Moreover, there is a sharp distinction between the cases $\alpha_b=1/2$ and $\alpha_b>1/2$. If $\alpha_b> 1/2$, then the buyer $b$ must spend all of the additional 100 money on $g$, which means this trick will not raise the budget of the buyer $b$ when $\alpha_b> 1/2$, and it is possible that the bidder $b$ runs over of their budget. In contrast, when $\alpha_b=1/2$, the bidder $b$ ties with the auxiliary rich buyer $c$ on the good $g$, then the flexibility of pacing equilibrium allows the bidder $b$ to receive an appropriate amount of good $g$, making sure that their total spending always remains within budget.

We will show \Cref{theorem: main} by reducing from the $\purec$ problem, which was introduced recently by Deligkas, Fearnley, Hollender, and Melissourgos~\cite{DFHM24}. $\purec$ has been proven a remarkable technique for establishing strong constant inapproximability in $\PPAD$, such as Nash equilibrium in graphical games and polymatrix games~\cite{DFHM24}, Nash equilibrium in public goods games~\cite{DH24}, and competitive equilibrium in Fisher markets~\cite{DFHM24fisher}.  In a $\purec$ instance, we are given a set of nodes $[n]$ and a set of gates $C$. The goal is to find an assignment $\bx:[n]\mapsto\set{0,1,\bot}$ that satisfies all the constraints provided by the set of gates $C$. See the formal definitions of the gates in \Cref{section: pure circuit}.

In our reduction, for every variable $v$ in the $\purec$ instance, we will introduce a buyer, denoted by $b_v$, whose pacing multiplier $\alpha_{b_v}$ will encode that variable. Throughout the remaining technical overview, let us suppose that we can make sure $\alpha_{b_v}\geq 1/2$ in any pacing equilibrium for free using the aforementioned trick. As a result, the decoding works naturally as follows: $\bx[v]=1$ if $\alpha_{b_v}=1$; $\bx[v]=0$ if $\alpha_{b_v}=1/2$; and $\bx[v]=\bot$ otherwise. 

To illustrate the construction of the gates, take the $\NOT$ gate as an example. Suppose that $(u,v)$ is a $\NOT$ gate where $u$ is the input variable and $v$ is the output variable. We construct a good $g$ and set the value $v_{b_u}(g)=1$ and $v_{b_v}(g)=1000$ (this could be any number larger than $2\cdot v_{b_u}(g)$). For all other bidders, the value of $g$ is set to zero. Recall that we can make sure $\alpha_{b_v}\geq 1/2$, so we know that \emph{the buyer $b_v$ always wins the good $g$ and its payment on this auxiliary good is exactly $\alpha_{b_u}$} (since $v_{b_u}(g)$ is set to 1). Let the budget of $b_v$ to be $100+0.7$, where the 100 is from the trick above and $0.7$ is some number between $[0.5,1]$. Now we are ready to check the correctness: If $\alpha_{b_u}=0.5$, then there is at least $0.7-0.5=0.2$ amount of money not spent, which is roughly 0.19\% fraction of $b_v$'s budget, implying that $\alpha_{b_v}=1$ by the ``not too much unnecessary pacing'' condition when $\gamma<0.0019$; If $\alpha_{b_u}=1$, then $\alpha_{b_v}=1/2$ otherwise the budget of $b_v$ will not be enough for the payment.

To get stronger constant inapproximability, namely, 1/3 in the \Cref{theorem: main} above, we need to redesign all the numbers above and make sure that all parameters that we intend to use work well for all gates in the $\purec$ problem. The full detailed construction is presented in \Cref{section: proof of main theorem}.

Our reduction can be easily extended to prove $\PPAD$-hardness for constant approximation for a weaker variant of approximate pacing equilibrium defined in \cite{chen2024complexity}, which we show in \Cref{section: variant chen2024complexity}.

\subsection{Additional Related Works}
\paragraph{Auto-bidding and Auctions in Online Advertising.} Online advertising markets are increasingly adopting automated bidding strategies, reflecting a significant shift in the industry. This evolution has sparked a surge in research efforts across both academia and industry, as highlighted in the recent survey by Google~\cite{ABBBDFGLLMMMMLPP24}.

A particularly influential approach in this domain is the pacing strategy, which has been instrumental in developing online learning algorithms for various applications. Notable examples include learning under budget constraints~\cite{BalseiroGur19,BLM23,WYDK23}, budget and Return-on-Spend (RoS) constraints~\cite{FPW23,BB0LMS024}, and budget and Return-on-Investment (ROI) constraints~\cite{LPSZ24}. Additionally, several studies have explored scenarios where participants strategically misreport their private information to manipulate the pacing equilibrium for their benefit~\cite{FLS24,WLCRZD24}.

\paragraph{Equilibrium Computation.} 
The computation of equilibria for online auctions with constraints has attracted significant attention recently. 

For second-price auctions, Conitzer et al. \cite{CKSM22} introduced the notion of pacing equilibria and showed its existence. Chen et al. \cite{chen2024complexity} showed $\PPAD$-completeness for approximate pacing equilibria with an inversely-polynomial precision. Chen et al.~\cite{CKK21throttling} further studied the throttling strategy for budget management, defining the notion of throttling equilibria and showing its existence and $\PPAD$-completeness.
In sharp contrast, for first-price auctions, both pacing equilibria and throttling equilibria can be efficiently computed by simple t\^{a}tonnement-style dynamics~\cite{BCIJEM07,CKK21throttling} (see also \cite{conitzer2022pacing}). Recently, Li and Tang~\cite{ijcai2024p320} extended the pacing equilibrium to auto-bidding equilibrium to incorporate the return-on-spend (RoS) constraint and showed its existence and $\PPAD$-hardness for second-price auctions. 

Notably, for the $\PPAD$-hardness results associated with throttling equilibria and auto-bidding equilibria, constant inapproximability has been established~\cite{CKK21throttling,ijcai2024p320}.
Our work bridges the final gap in this domain by establishing the constant inapproximability of pacing equilibria.

\section{Preliminaries}\label{sec:preliminaries}

\subsection{Second-Price Pacing Games}

We start with the model of second-price pacing games. A second-price pacing game is an ordered tuple $G=(n,m,(v_{ij}),(B_i))$. There are $n$ buyers and $m$ (indivisible) goods, where $v_{ij}$ represents the value of good $j$ to the buyer $i$. Every buyer $i$ also has a budget, denoted by $B_i>0$. Each good is sold in an individual (single-slot) second-price auction. Every buyer participates in the auctions by using a pacing strategy, namely, the buyer $i$  chooses a pacing multiplier $\alpha_i\in[0,1]$ and bids $\alpha_i v_{ij}$ on every good $j$.

Each auction then operates similarly to a classic second-price auction. Let $h_j(\alpha)\coloneqq \max_{i}\alpha_i v_{ij}$ denote the highest bid on good $j$ and $p_j(\alpha)$ denote the second highest bid on good $j$. Notably, $h_j(\alpha)=p_j(\alpha)$ if there are two buyers with the same highest bids. Only buyers whose bids match $h_j(\alpha)$ are eligible to purchase a non-zero fraction of the good $j$ under the unit price $p_j(\alpha)$; in our setting with indivisible goods, the fraction of good $j$ received by buyer $i$
 should be interpreted as the probability of
  allocating good $j$ to buyer $i$.

Under mild conditions (for every $i$, there exists $j$ such that $v_{ij}>0$ and for every $j$, there exists $i$ such that $v_{ij}>0$), Conitzer et al. proved in \cite{CKSM22}  the following definition of pacing equilibria always exists in a second-price pacing game, where $x_{ij}$ in the vector $x=(x_{ij}) \in [0,1]^{nm}$ should be interpreted as the probability 
  of allocating good $j$ to buyer $i$:

\begin{definition}[Pacing Equilibria] \label{definition_exact_pacing_equilibrium}
Given a second-price pacing game $G = (n, m, (v_{ij}), (B_i))$, 
we say $(\alpha, x)$ with $\alpha=(\alpha_i)\in [0,1]^n$, $x=(x_{ij}) \in [0,1]^{nm}$
  and $\sum_{i\in [n]} x_{ij}\le 1$ for all $j\in [m]$ %
  is a \emph{pacing equilibrium} of $G$ if \vspace{0.15cm}
\begin{flushleft}
\begin{itemize}
    \item[(a)] Only buyers with the highest bid win the good: $x_{ij} > 0$ implies $\alpha_i v_{ij}= h_j(\alpha)$.\vspace{0.1cm}
    \item[(b)] Full allocation of each good with a positive bid:  $h_j(\alpha) > 0$  implies $\sum_{i\in [n]} x_{ij} = 1$.\vspace{0.1cm}%
    \item[(c)] Budgets are satisfied: $\sum_{j\in [m]} x_{ij} p_j(\alpha) \leq B_i$.\vspace{0.1cm}
	\item[(d)] %
	No unnecessary pacing: $\sum_{j\in [m]} x_{ij} p_j(\alpha) < B_i$ implies $\alpha_i=1$.\vspace{0.15cm}
\end{itemize}
\end{flushleft}
\end{definition}

Given a second-price pacing game $G$, we are interested in the computation of a $\gamma$-approximate pacing equilibrium, which relaxes the ``no unnecessary pacing'' condition. See also \Cref{definition_pacing_equilibrium_1} for an even weaker notion of approximation.
\begin{definition}[Approximate Pacing Equilibria] \label{definition_pacing_equilibrium}
Given a second-price pacing game $G = (n, m, (v_{ij}),$ $(B_i))$ and a parameter $\gamma \in [0,1)$, we say $(\alpha, x)$, with $\alpha=(\alpha_i)\in [0,1]^n$, $x=(x_{ij}) \in [0,1]^{nm}$
  and $\sum_{i\in [n]} x_{ij}\le 1$ for all $j\in [m]$, is
a $\gamma$-\emph{approximate pacing equilibrium} of $G$ if  \vspace{0.1 cm}
\begin{flushleft}
\begin{itemize}
    \item[(a)] Only buyers with the highest bid win the good: 
    $x_{ij} > 0$ implies $\alpha_i v_{ij} \geq h_j(\alpha)$.\vspace{0.1 cm}
    \item[(b)] Full allocation of each good with a positive bid: $h_j(\alpha) > 0$ implies 
      $\sum_{i\in [n]} x_{ij} = 1$.\vspace{0.1 cm}
    \item[(c)] Budgets are satisfied: $\sum_{j\in [m]} x_{ij} p_j(\alpha) \leq B_i$.\vspace{0.1 cm}
	\item[(d)] \emph{Not too much} unnecessary pacing:  $\sum_{j\in [m]} x_{ij} p_j(\alpha) < (1 - \gamma)B_i$ implies $\alpha_i = 1$.\vspace{0.15cm}
\end{itemize}
\end{flushleft}
\end{definition}

\subsection{The $\purec$ Problem}
\label{section: pure circuit}
We prove \Cref{theorem: main} by reducing from the $\purec$ problem \cite{DFHM24}. An instance of the $\purec$ problem is defined over a set of nodes $[n]=\{1,\ldots,n\}$ and a set of gates $C$. The goal is to find an assignment $\bx: [n]\mapsto \set{0,1,\bot}$ that satisfies all the constraints provided by the gates $C$. The constraint of each gate $(T,u,v,w)\in C$, where $u,v,w\in [n]$ are distinct nodes and $T\in\set{\NOT,\NOR,\PURIFY}$, is defined as follows:\footnote{Truth tables of $\NOT$, $\NOR$ and $\PURIFY$ can be found in \cite{DFHM24}.}
\begin{itemize}
	\item If $T=\NOT$, then $u$ is viewed as the input and $v$ as the output ($w$ is unused), and $\bx$ satisfies \[\bx[u]\in\set{0,1}\ \Longrightarrow \ \bx[v]=\neg\bx[u].\]
	\item If $T=\NOR$, then $u$ and $v$ are viewed as inputs and $w$ as the output, and $\bx$ satisfies 
\begin{align*}
\bx[u]=1\vee\bx[v]=1\ &\Longrightarrow\ \bx[w]=0 \quad \text{ and }\\
 \bx[u]=0\wedge\bx[v]=0\ &\Longrightarrow\ \bx[w]=1.
 \end{align*}
	\item If $T=\PURIFY$, then $u$ is viewed as the input and $v$ and $w$ as the outputs, and $\bx$ satisfies \begin{align*}
	\set{\bx[v],\bx[w]}\cap\set{0,1}\neq \emptyset \quad \text{ and } \\ \bx[u]\in\set{0,1}\ \Longrightarrow\  \bx[v]=\bx[w]=\bx[u].
	\end{align*}
\end{itemize}
An important note is that we require every node to be the output of exactly one gate. We also define the in-degree and out-degree of a node of a $\purec$ instance, based on its associated interaction graph. In this graph, there is an edge from node $u$ to node $v$ whenever $v$ is the output of a gate with input $u$. 

\begin{theorem}[\cite{DFHM24}]
	$\purec$ with gate set $\set{\emph{\NOT,\NOR,\PURIFY}}$ is $\PPAD$-complete. This holds even when the total degree of every node is at most 3; more specifically, for every node, the in- and out-degrees, $d_{in}$ and $d_{out}$, satisfy $(d_{in},d_{out})\in\set{(1,1),(2,1),(1,2)}$.
\end{theorem}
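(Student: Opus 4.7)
The plan is to prove the statement in three stages: containment in $\PPAD$, $\PPAD$-hardness, and the degree-bound strengthening. Throughout, I would treat $\purec$ as a three-valued logical system and design continuous relaxations to bridge with known $\PPAD$-complete problems.

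For membership in $\PPAD$, I would introduce a continuous relaxation of the problem. Each node $v$ is assigned a real value $y_v \in [0,1]$ (intuitively $y_v \approx 0$ encodes $\bx[v]=0$, $y_v \approx 1$ encodes $\bx[v]=1$, and intermediate values encode $\bx[v]=\bot$). I then construct a continuous, polynomial-time computable function $F : [0,1]^n \to [0,1]^n$ whose coordinates are defined gate-by-gate: for $\NOT$, the component pulls $y_v$ toward $1 - y_u$; for $\NOR$, toward $1 - \max(y_u, y_v)$ with a suitable clamp; for $\PURIFY$, a careful definition that penalizes the configuration in which both outputs simultaneously sit in the ambiguous middle while the input is clearly $0$ or $1$. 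A fixed point of $F$ can then be rounded back to a valid $\purec$ assignment, and Brouwer fixed points of such polynomially-computable maps lie in $\PPAD$ via the standard reduction to $\textsc{EndOfLine}$.

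For $\PPAD$-hardness, the natural starting point is the $\epsilon$-\textsc{GCircuit} problem (with inverse-polynomial $\epsilon$), known to be $\PPAD$-complete. The heart of the reduction is to simulate the arithmetic gates of $\epsilon$-\textsc{GCircuit} using only $\NOT$, $\NOR$, and $\PURIFY$, in such a way that the three-valued logic ``absorbs'' the approximation slack: whenever an arithmetic gate is comfortably decided, the $\purec$ counterpart is forced to $0$ or $1$, and $\bot$ is assigned only when the continuous value is genuinely near a threshold. The $\PURIFY$ gate is the key tool here, since its requirement that at least one output be in $\{0,1\}$ lets us effectively ``decide'' a borderline continuous value into a discrete bit while still leaving room for the other output to record ambiguity. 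I would build gadgets for $\mathsf{AND}$, $\mathsf{OR}$, and value-copy from $\NOT$ and $\NOR$, then a thresholding/purification gadget for the $G_{\zeta}, G_{<}, G_{\pm}$ gates of $\epsilon$-\textsc{GCircuit}, and finally verify that every solution of the produced $\purec$ instance decodes to a valid $\epsilon$-\textsc{GCircuit} assignment.

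For the degree bound, I would take any $\purec$ instance and normalize it: if a node $u$ has out-degree $d>1$, introduce a chain of $\PURIFY$ gates whose inputs are fed by $\NOT\circ\NOT$ copies of $u$, producing $d$ outputs each used once, while preserving correctness thanks to the $\PURIFY$ semantics when the input is determinate (and tolerating $\bot$ when it is not). High-fan-in $\NOR$ nodes are decomposed into a balanced tree of $2$-input $\NOR$s separated by $\NOT$s. A case analysis then shows every node ends with $(d_{in},d_{out}) \in \{(1,1),(2,1),(1,2)\}$. The main obstacle is the hardness direction: constructing $\purec$-level gadgets that faithfully simulate arithmetic within the $\epsilon$-slack, while ensuring $\bot$ propagates exactly where needed and no spurious solutions arise from the interaction between $\PURIFY$'s non-determinism and the gadget structure. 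I expect this step to dominate the technical work, while the containment proof is standard Brouwer machinery and the degree reduction is essentially a bookkeeping exercise.
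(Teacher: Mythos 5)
This theorem is not proved in the paper at all: it is imported verbatim from \cite{DFHM24}, so the paper's ``proof'' is a citation, and what you are attempting is to reprove the main result of that (entire, quite technical) paper. Judged on its own terms, your sketch gets the easy parts roughly right and leaves the hard part as an acknowledged placeholder. The membership argument (continuous relaxation of the gates, Brouwer fixed point, reduction to \textsc{EndOfLine}, then rounding $[0,1]$ back to $\set{0,1,\bot}$) is essentially the route taken in \cite{DFHM24}, although you would still need to exhibit concrete continuous gate functions whose \emph{exact} fixed points round to valid assignments --- in particular for $\PURIFY$, whose defining constraint (at least one output is pure even when the input is $\bot$) is exactly the property a naive relaxation fails to deliver. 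The degree-reduction step via trees of $\PURIFY$ gates for fan-out is also the standard and correct idea.

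The genuine gap is the hardness direction, which you explicitly defer (``I expect this step to dominate the technical work''). Moreover, the specific route you propose --- reducing from $\epsilon$-\textsc{GCircuit} with inverse-polynomial $\epsilon$ by ``simulating the arithmetic gates'' with $\NOT$, $\NOR$, $\PURIFY$ --- faces an obstacle you do not address. Pure-circuit nodes carry values in $\set{0,1,\bot}$, so arithmetic on $[0,1]$-valued wires would have to be carried out on binary encodings spread over polynomially many nodes; a single $\bot$ on an input bit can then corrupt every output bit of such a gadget, and $\PURIFY$ only promises that \emph{one} of its two outputs is pure, with no control over which one or over the pure value produced when the input is $\bot$. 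Containing this $\bot$-propagation is precisely the technical heart of the matter, and it is why \cite{DFHM24} do not go through $\epsilon$-\textsc{GCircuit} at all but give a direct, carefully engineered reduction from \textsc{EndOfLine} in which the $\bot$ value is confined to the ``boundary'' of the line-following construction. As written, your proposal names the destination but not the mechanism that makes the reduction sound, so it does not constitute a proof.
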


For convenience, we will use the $\NPURIFY$ gate instead of  the $\PURIFY$ gate in  $\purec$,
  where $\NPURIFY$ is just the negation of $\PURIFY$, namely 
\begin{itemize}
	\item If $T=\NPURIFY$, then $u$ is viewed as the input and $v$ and $w$ as the outputs, and $\bx$ satisfies \begin{align*}
	\set{\bx[v],\bx[w]}\cap\set{0,1}\neq \emptyset \quad \text{ and }\\ \bx[u]\in\set{0,1}\ \Longrightarrow \ \bx[v]=\bx[w]=\neg\bx[u].
	\end{align*}
\end{itemize}

It is clear that each $\PURIFY$ gate can be simulated by a $\NPURIFY$ gate combined with two NOT gates
  on its outputs. This gives a polynomial-time reduction from $\purec$ with gate set $\set{\NOT,\NOR,\PURIFY}$ to 
  $\purec$ with gate set $\set{\NOT,\NOR,\NPURIFY}$, and we get the following corollary:

\begin{corollary}
	$\purec$ with gate set $\set{\emph{\NOT,\NOR,\NPURIFY}}$ is $\PPAD$-complete. This holds even when the total degree of every node is at most 3.
\end{corollary}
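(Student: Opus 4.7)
My plan is to give a polynomial-time reduction from $\purec$ with gate set $\{\NOT,\NOR,\PURIFY\}$ (which is $\PPAD$-complete at total degree $3$ by the preceding theorem) to $\purec$ with gate set $\{\NOT,\NOR,\NPURIFY\}$ that preserves the degree bound. Membership in $\PPAD$ is clear since an $\NPURIFY$ gate is just a $\PURIFY$ gate composed with negations, so only the hardness direction is substantive, and the reduction is the one hinted at before the statement: replace each $\PURIFY$ by an $\NPURIFY$ whose two outputs are then passed through $\NOT$ gates.

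Concretely, given an instance $I$, I keep every $\NOT$ and $\NOR$ gate unchanged. For every $(\PURIFY,u,v,w)\in C$ I introduce two fresh nodes $v',w'$ and add the gates $(\NPURIFY,u,v',w')$, $(\NOT,v',v)$, and $(\NOT,w',w)$. Since $v$ and $w$ were each the output of exactly one gate in $I$ and are now the outputs of exactly one $\NOT$ gate each, while $v',w'$ are outputs of the new $\NPURIFY$ gate only, the invariant that every node is the output of exactly one gate is preserved.

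The equivalence of solutions is essentially $\neg\neg\,x=x$. In one direction, given a solution $\bx$ of the new instance $I'$, the two $\NOT$ gadgets together with the $\NPURIFY$ gate force $\bx[v]=\bx[w]=\bx[u]$ whenever $\bx[u]\in\{0,1\}$, and the non-$\bot$ witness guaranteed by $\NPURIFY$ on $\{v',w'\}$ propagates through the $\NOT$ gates to $\{v,w\}$ (because $\neg$ maps $\{0,1\}$ into $\{0,1\}$), so the restriction of $\bx$ to the nodes of $I$ satisfies the $\PURIFY$ constraint. The converse, extending any solution of $I$ to $I'$ by flipping bits across the new $\NOT$ edges, is routine.

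The last thing to check is the degree bound, which is the only bookkeeping I would do carefully. In the new interaction graph the two outgoing edges from $u$ previously induced by the $\PURIFY$ gate are now induced by the $\NPURIFY$ gate, so $u$'s out-degree is unchanged; the in-degrees of $v$ and $w$ are each still $1$, now from a $\NOT$ gate; and each fresh node $v',w'$ has $(d_{in},d_{out})=(1,1)$. Hence the total degree of every node remains at most $3$ and the allowed pattern constraint on $(d_{in},d_{out})$ is maintained. I do not anticipate any real obstacle; the one place I would scrutinize is the propagation of the non-$\bot$ witness through the $\NOT$ gates in the backward direction, and as noted this is immediate.
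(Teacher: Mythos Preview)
Your proposal is correct and follows exactly the approach the paper uses: the paper's entire argument is the single sentence preceding the corollary, namely that each $\PURIFY$ gate is simulated by an $\NPURIFY$ gate followed by two $\NOT$ gates on its outputs. You have simply fleshed out the details (the output-uniqueness invariant, the solution correspondence, and the degree bookkeeping) that the paper leaves implicit.
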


\section{Proof of \Cref{theorem: main}}
\label{section: proof of main theorem}

Recall our main theorem: 
\theoremmain*

We give a polynomial-time reduction from $\purec$ to the problem of computing a $\gamma$-approximate pacing equilibrium. 
Let $\delta=1/3-\gamma>0$ and $\kappa=3\delta/2$. 
Given a pure circuit over nodes $[n]$ with $\set{ {\NOT,\NOR,\NPURIFY}}$-gates $C$,
  we will use $\delta$ and $\kappa$ as parameters in our construction of a second-price pacing game $G$ as follows.

\paragraph{Variable encodings.} For each variable $v\in [n]$ in the $\purec$ instance, we introduce a buyer, denoted by $b_v$, whose pacing multiplier $\alpha_{b_v}$ will encode the variable $\bx[v]$ of $v$. In particular, our construction will make sure that $\alpha_{b_v}\in[\kappa,1]$ for all $v\in[n]$ in any $\gamma$-approximate pacing equilibrium (\Cref{lemma: general range}). Given a vector of the pacing multipliers $\alpha$, we will extract an assignment for the $\purec$ instance as follows: 
\begin{equation}\label{eq:convert}
	\bx[v]\coloneqq
	\left\{ \begin{array}{ll}
	0 & \text{if $\alpha_{b_v}=\kappa$};\\
	1 & \text{if $\alpha_{b_v}=1$};\\
	\bot & \text{otherwise}.
	\end{array}
	\right.
\end{equation}
We next describe how to simulate each $\NOT$, $\NOR$ and $\NPURIFY$ gate in the pacing game, after which we prove the correctness
  of the reduction. 
Recall that every node $v$ is used as the output of exactly one gate in $C$.
As the construction below goes through each gate in $C$, we set the budget of the buyer that
  corresponds to the output of the gate; this way the budget of every buyer $b_v$ is well defined at the end.

\paragraph{$\NOT$ gates.}
Fix a $\NOT$ gate $(u,v,w,\NOT)$ in $C$, where $u$ is the input and $v$ is the output (and $w$ is unused).
We set the budget of $b_v$ to be 2.

To simulate this $\NOT$ gate with output $v$, we create an auxiliary buyer $c_v$ as well as a good $g_v$.
Only buyers $b_v$ and $c_v$ have non-zero values on the good $g_v$:
\begin{equation*}
	\text{value of $g_v$ to a buyer}\coloneqq
	\left\{ \begin{array}{ll}
	(1+\delta)/\kappa & \text{if the buyer is $b_v$};\\
	1+\delta & \text{if the buyer is $c_v$};\\
	0 & \text{for every other buyer (in the final game)}.
	\end{array}
	\right.
\end{equation*}
Note that in the construction above we intend to make sure that 
$$v_{b_v}(g_{v})=v_{c_v}(g_{v})\big/\kappa.$$ 
We set the budget of buyer $c_v$ to be $1000\gg 1+\delta$.

We then create another good $g_{(u,v)}$. The values of $g_{(u,v)}$ to the buyers are as follows.
\begin{equation*}
	\text{value of $g_{(u,v)}$ to a buyer}\coloneqq
	\left\{ \begin{array}{ll}
	1 & \text{if the buyer is $b_u$};\\
	(1/\kappa)+1 & \text{if the buyer is $b_v$};\\
	0 & \text{for every other buyer}.
	\end{array}
	\right.
\end{equation*}
The key property we need is  that 
$$v_{b_v}(g_{(u,v)})>v_{b_u}(g_{(u,v)})\big/\kappa.$$ Looking ahead, in \Cref{lemma: general range} we will show that $\alpha_{b_v}\geq \kappa$ in any $\gamma$-approximate pacing equilibrium. This will make sure that in any $\gamma$-approximate equilibrium, the buyer $b_v$ always wins the good $g_{(u,v)}$ and pays $\alpha_{b_u}$ for $g_{(u,v)}$ (because the bid from $b_u$ is $\alpha_{b_u}\cdot 1$ on $g_{(u,v)}$).

\paragraph{NOR gates.}

Fix a $\NOR$ gate $(u,v,w,\NOR)$ in $C$. 
We set the budget of $b_w$ to be $3$.

To simulate a $\NOR$ gate, we create an auxiliary buyer $c_w$ and a new good $g_w$. Only $b_w$ and $c_w$ have non-zero values on the good $g_w$. In particular, we let 
\begin{equation*}
\text{value of $g_{w}$ to a buyer}\coloneqq
	\left\{ \begin{array}{ll}
	(2-\delta)/\kappa & \text{if the buyer is $b_w$};\\
	2-\delta & \text{if the buyer is $c_w$};\\
	0 & \text{for every other buyer}.
	\end{array}
	\right.
\end{equation*}
Again, we intend to make sure that $$v_{b_w}(g_{w})=v_{c_w}(g_{w})\big/\kappa.$$ We set the budget of $c_w$ to be $1000\gg 2-\delta$.

We then create two new goods $g_{(u,w)}$ and $g_{(v,w)}$. The values of these goods to the buyers are 
\begin{align*}
	\text{value of $g_{(u,w)}$ to a buyer}&\coloneqq
	\left\{ \begin{array}{ll}
	1 & \text{if the buyer is $b_u$};\\
	1/\kappa+1 & \text{if the buyer is $b_w$};\\
	0 & \text{for every other buyer}.
	\end{array}
	\right.
	\quad \text{and}\\[1ex]
	\text{value of $g_{(v,w)}$ to a buyer}&\coloneqq
	\left\{ \begin{array}{ll}
	1 & \text{if the buyer is $b_v$};\\
	1/\kappa+1 & \text{if the buyer is $b_w$};\\
	0 & \text{for every other buyer}.
	\end{array}
	\right.
\end{align*}
The key property is that $$v_{b_w}(g_{(u,w)})>v_{b_u}(g_{(u,w)})\big/\kappa\quad\text{ and }\quad 
v_{b_w}(g_{(v,w)})>v_{b_v}(g_{(v,w)})\big/\kappa.$$ Looking ahead, in \Cref{lemma: general range} we will show that $\alpha_{b_w}\geq \kappa$ in any $\gamma$-approximate pacing equilibrium. This will make sure that in any $\gamma$-approximate equilibrium, the buyer $b_w$ always wins both goods $g_{(u,w)}$ and $g_{(v,w)}$ and pays $\alpha_{b_u}+\alpha_{b_v}$ on these two goods.

\paragraph{NPURIFY gates.}
Fix a $\NPURIFY$ gate $(u,v,w,\NPURIFY)$ in $C$. (Recall that both $v$ and $w$ are outputs.)
We set the budget of both $b_v$ and $b_w$ to be $3/2$. 

Next we create an auxiliary buyer $c_v$ and a new good $g_v$. Only $b_v$ and $c_v$ have non-zero values on $g_v$. Similarly, for the buyer $b_w$, we create an auxiliary buyer $c_w$ and a new good $g_w$, and only $b_w$ and $c_w$ have non-zero values on $g_w$. In particular, we set  
\begin{align*}
	\text{value of $g_{v}$ to a buyer}&\coloneqq
	\left\{ \begin{array}{ll}
	(1-\delta/2)/\kappa & \text{if the buyer is $b_v$};\\
	1-\delta/2 & \text{if the buyer is $c_v$};\\
	0 & \text{for every other buyer}.
	\end{array}
	\right.
	\quad \text{and}\\[1ex] \quad
	\text{value of $g_{w}$ to a buyer}&\coloneqq
	\left\{ \begin{array}{ll}
	(1/2+\delta/2)/\kappa & \text{if the buyer is $b_w$};\\
	1/2+\delta/2 & \text{if the buyer is $c_w$};\\
	0 & \text{for every other buyer}.
	\end{array}
	\right.
\end{align*}
Note that a key property in the construction above is that $$v_{b_v}(g_{v})=v_{c_v}(g_{v})\big/\kappa\quad\text{ and }\quad
v_{b_w}(g_{w})=v_{c_w}(g_{w})\big/\kappa.$$ We set the budget of $c_v$ to be $1000\gg 1-\delta/2$ and the budget of $c_w$ to be $1000\gg 1/2+\delta/2$.

We then create two new goods: $g_{(u,v)}$ and $g_{(u,w)}$. The values of these two goods are
\begin{align*}
	\text{value of $g_{(u,v)}$ to a buyer}&\coloneqq
	\left\{ \begin{array}{ll}
	1 & \text{if the buyer is $b_u$};\\
	1/\kappa+1 & \text{if the buyer is $b_v$};\\
	0 & \text{for every other buyer}.
	\end{array}
	\right.
	\quad \text{and}\\[1ex] \quad
	\text{value of $g_{(u,w)}$ to a buyer}&\coloneqq
	\left\{ \begin{array}{ll}
	1 & \text{if the buyer is $b_u$};\\
	1/\kappa+1 & \text{if the buyer is $b_w$};\\
	0 & \text{for every other buyer}.
	\end{array}
	\right.
\end{align*}
The key properties of the construction above are that $$v_{b_v}(g_{(u,v)})>v_{b_u}(g_{(u,v)})\big/\kappa\quad \text{ and }\quad v_{b_w}(g_{(u,w)})>v_{b_u}(g_{(u,w)})\big/\kappa.$$ Again, looking ahead, in \Cref{lemma: general range} we will show that $\alpha_{b_v}\geq \kappa$ and $\alpha_{b_w}\geq \kappa$ in any $\gamma$-approximate pacing equilibrium. This makes sure that in any $\gamma$-approximate equilibrium, the buyer $b_v$ always wins $g_{(u,v)}$ and pays $\alpha_{b_u}$ for $g_{(u,v)}$; the buyer $b_w$ always wins  $g_{(u,w)}$ and pays $\alpha_{b_u}$ for $g_{(u,w)}$.

\subsection{Proof of Correctness}

This finishes the construction of the pacing game $G$, which clearly can be done in polynomial time.
To prove the correctness of the reduction, we show in the rest of this section that given any $\alpha$ in
  a $\gamma$-approximate pacing equilibrium, $\bx$ obtained from $\alpha$ using \Cref{eq:convert}
  must be a solution to the given $\purec$ instance.
  
We start by showing that $\alpha$ must satisfy $\alpha_{b_v}\in[\kappa,1]$ for all $v\in[n]$. Note that this in particular implies that there is a strictly positive bid on every item; thus, the buyers are not allocated any item for which they have zero value.

\begin{lemma}\label{lemma: general range}
	Let $\alpha$ be the multipliers in any $\gamma$-approximate pacing equilibrium of $G$. Then we must have $\alpha_{c_v}=1$ and $\alpha_{b_v}\in[\kappa,1]$ for all $v\in[n]$.
\end{lemma}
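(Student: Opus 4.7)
The plan is to split the proof into two parts: the easy claim $\alpha_{c_v}=1$ and the harder claim $\alpha_{b_v}\ge\kappa$ (the upper bound $\alpha_{b_v}\le 1$ is immediate from the definition of a pacing multiplier). For $\alpha_{c_v}=1$, the key point is that by construction every auxiliary buyer $c_v$ has positive value only on the single good $g_v$, so its total spending is at most $v_{c_v}(g_v)\le\max\{1+\delta,\,2-\delta,\,1-\delta/2,\,1/2+\delta/2\}<2$, which is far below $(1-\gamma)\cdot 1000 = (1-\gamma)B_{c_v}$; condition~(d) of \Cref{definition_pacing_equilibrium} then forces $\alpha_{c_v}=1$.

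For $\alpha_{b_v}\ge\kappa$ I would argue by contradiction using a minimality trick. Let $v_*\in[n]$ minimize $\alpha_{b_v}$ and set $\alpha_*:=\alpha_{b_{v_*}}$; assume $\alpha_*<\kappa$. The key observation is that at this minimum, $b_{v_*}$ can only possibly win the incoming goods $g_{(u,v_*)}$ where $u$ is an input to the gate producing $v_*$. Indeed, on the auxiliary good $g_{v_*}$, the design identity $v_{b_{v_*}}(g_{v_*})=v_{c_{v_*}}(g_{v_*})/\kappa$ combined with $\alpha_{c_{v_*}}=1$ gives $b_{v_*}$'s bid strictly below $c_{v_*}$'s bid, so condition~(a) excludes $b_{v_*}$ from winning any fraction. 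On each outgoing good $g_{(v_*,w)}$, the competitor $b_w$ bids $\alpha_{b_w}(1/\kappa+1)\ge\alpha_*(1/\kappa+1)>\alpha_*=b_{v_*}$'s bid (the first inequality from the choice of $v_*$, the second from $1/\kappa+1>1$ and $\alpha_*>0$), so condition~(a) again excludes $b_{v_*}$; the degenerate case $\alpha_*=0$ is immediate since all bids of $b_{v_*}$ vanish and its spending is zero.

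Since each payment on an incoming good is at most $b_{v_*}$'s own bid $\alpha_*(1/\kappa+1)$, the total spending is strictly less than $d_{in}(v_*)\cdot\kappa(1/\kappa+1)=d_{in}(v_*)(1+\kappa)$. A short case analysis on the type of gate producing $v_*$ then verifies $d_{in}(v_*)(1+\kappa)\le(1-\gamma)B_{b_{v_*}}$: for a \NOT\ output, $(d_{in},B_{b_{v_*}})=(1,2)$; for a \NOR\ output, $(d_{in},B_{b_{v_*}})=(2,3)$; for a \NPURIFY\ output, $(d_{in},B_{b_{v_*}})=(1,3/2)$. Plugging in $\kappa=1/2-3\gamma/2$ gives the inequality and in fact exact equality in the \NOR\ and \NPURIFY\ cases. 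Combined with the strict inequality already produced by $\alpha_*<\kappa$, the total spending is strictly less than $(1-\gamma)B_{b_{v_*}}$, so condition~(d) forces $\alpha_{b_{v_*}}=1$, contradicting $\alpha_*<\kappa<1$.

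The main obstacle I expect is the tight parameter tuning $\kappa=3\delta/2$ with $\delta=1/3-\gamma$: the \NOR\ and \NPURIFY\ cases reduce to an exact equality $d_{in}(1+\kappa)=(1-\gamma)B$, so the strict inequality demanded by condition~(d) has to be delivered entirely by the strict hypothesis $\alpha_*<\kappa$, leaving no arithmetic slack anywhere. The conceptual engine \emph{is} clean \textemdash\ the minimality of $\alpha_*$ asymmetrically kills the outgoing-good payments because the output-side value $1/\kappa+1$ strictly dominates the input-side value $1$ \textemdash\ but the numerics must be executed carefully.
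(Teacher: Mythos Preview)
Your proof is correct and takes a somewhat different route from the paper's. The paper argues directly for an arbitrary $v$ with $\alpha_{b_v}<\kappa$: after ruling out $g_v$ as you do, it bounds $b_v$'s total expense by the \emph{opponent's} bid on each incoming good, namely $\alpha_{b_u}\cdot 1\le 1$, obtaining expense $\le d_{in}$ and hence expense$/B\le 2/3<1-\gamma$. It does not explicitly treat the outgoing goods $g_{(v,w)}$ at all, and your minimality trick is exactly what is needed to dispatch them cleanly: at the minimizer $v_*$, the competitor $b_w$ on each outgoing good satisfies $\alpha_{b_w}\ge\alpha_*$ and therefore strictly outbids $b_{v_*}$. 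On the incoming side you instead bound each payment by $b_{v_*}$'s \emph{own} bid $\alpha_*(1/\kappa+1)$, also a valid second-price bound but tighter, which is why you land on the exact identity $d_{in}(1+\kappa)=(1-\gamma)B$ in the $\NOR$ and $\NPURIFY$ cases and must let the strict hypothesis $\alpha_*<\kappa$ carry the strict inequality. Either incoming-side bound works once the outgoing goods are handled; the paper's coarser bound $d_{in}/B\le 2/3$ avoids the tightrope, while your argument makes the role of the parameter choice $\kappa=3\delta/2$ more transparent.
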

\begin{proof}
	We first note that in the construction, $c_v$'s budget is so large that by the no unnecessary pacing condition, we have $\alpha_{c_v}=1$ for all $v\in[n]$. Now we show that $\alpha_{b_v}\geq \kappa$ for all $v\in[n]$. At a high level, the reasoning is exactly the same for the three different gates, namely, if $\alpha_{b_v}<\kappa$, then $b_v$ would not be able to buy $g_v$, which (we will show) implies that $b_v$ will spend less than $(1-\gamma)$ fraction of its budget, leading to a contradiction.
\begin{flushleft}\begin{enumerate}	
\item	\textbf{$\NOT$ gates.} Consider a $\NOT$ gate $(u,v,w,\NOT)$ and assume for a contradiction that $\alpha_{b_v}<\kappa$. Then we know that $\alpha_{b_v}\cdot v_{b_v}(g_v)<\alpha_{c_v}\cdot v_{c_v}(g_v)$, which means that $b_v$ cannot spend any money on the good $g_v$. Then, $b_v$'s overall expense is at most $v_{b_u}(g_{(u,v)})=1$ (since it is a second-price auction). Note that the budget of $b_v$ is 2. Thus, $b_v$'s expense is at most $1/2<(1-\gamma)$-fraction of its budget, leading to a contradiction.
	
\item	\textbf{$\NOR$ gates.} Consider a $\NOR$ gate $(u,v,w,\NOR)$ and assume for a contradiction that $\alpha_{b_w}<\kappa$. Then we know that $\alpha_{b_w}\cdot v_{b_w}(g_w)<\alpha_{c_w}\cdot v_{c_w}(g_w)$, which means that $b_w$ cannot spend any money on the good $g_w$. Then, $b_w$'s overall expense is at most $$v_{b_u}(g_{(u,w)})+v_{b_v}(g_{(v,w)})=2$$ (since it is a second price auction). Note that the budget of $b_w$ is $3$. Thus, $b_w$'s expense is at most $2/3<(1-\gamma)$-fraction of its budget, leading to a contradiction.
	
\item	\textbf{$\NPURIFY$ gates.} Consider a $\NPURIFY$ gate $(u,v,w,\NPURIFY)$ and  
	assume for a contradiction that $\alpha_{b_v}<\kappa$ or $\alpha_{b_w}<\kappa$. 
When $\alpha_{b_v}<\kappa$, we have $\alpha_{b_v}\cdot v_{b_v}(g_v)<\alpha_{c_v}\cdot v_{c_v}(g_v)$, which means that $b_v$ cannot spend any money on the good $g_v$. Then, $b_v$'s overall expense is at most $v_{b_u}(g_{(u,v)})=1$. Note that the budget of $b_v$ is $3/2$. Thus, $b_v$'s expense is at most $2/3<(1-\gamma)$-fraction of its budget, leading to a contradiction.
	
	When $\alpha_{b_w}<\kappa$, we have $\alpha_{b_w}\cdot v_{b_w}(g_w)<\alpha_{c_w}\cdot v_{c_w}(g_w)$, which means that $b_w$ cannot spend any money on the good $g_w$. Then, $b_w$'s overall expense is at most $v_{b_u}(g_{(u,w)})=1$. Note that the budget of $b_w$ is $3/2$. Thus, $b_w$'s expense is at most $2/3<(1-\gamma)$-fraction of its budget, leading to a contradiction.
\end{enumerate}\end{flushleft}	 
	 This finishes the proof of the lemma.
\end{proof}
\begin{lemma}\label{lemma: potential winning goods}
	Let $\alpha$ be the multipliers of any $\gamma$-approximate pacing equilibrium of $G$. Then for any good $g_{(u,v)}$, the buyer $b_v$ buys the whole unit of $g_{(u,v)}$ and pays $\alpha_{b_u}$ on $g_{(u,v)}$.
\end{lemma}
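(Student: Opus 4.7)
The plan is to argue that on any good $g_{(u,v)}$, the buyer $b_v$'s bid is strictly larger than anyone else's bid, so $b_v$ is the unique highest bidder and therefore wins the entire unit at the second-highest price, which will be $\alpha_{b_u}$.

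First I would inspect the valuations of $g_{(u,v)}$: by construction, the only buyers with nonzero value are $b_u$ with value $1$ and $b_v$ with value $1/\kappa + 1$. Hence the only bids to consider are $\alpha_{b_u}\cdot 1 = \alpha_{b_u}$ and $\alpha_{b_v}\cdot(1/\kappa + 1)$; every other buyer bids $0$.

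Next, I would invoke \Cref{lemma: general range} to get $\alpha_{b_v}\geq \kappa$, which implies
\[
\alpha_{b_v}\cdot \left(\frac{1}{\kappa}+1\right) \;\geq\; \kappa\cdot\left(\frac{1}{\kappa}+1\right) \;=\; 1+\kappa \;>\; 1 \;\geq\; \alpha_{b_u}.
\]
So $b_v$ is strictly the highest bidder on $g_{(u,v)}$ in any $\gamma$-approximate pacing equilibrium. By condition (a) of \Cref{definition_pacing_equilibrium}, no buyer other than $b_v$ can be allocated a positive fraction of $g_{(u,v)}$, and since $h_{g_{(u,v)}}(\alpha)\geq 1+\kappa>0$, condition (b) forces $\sum_i x_{i,g_{(u,v)}}=1$, so $b_v$ wins the full unit.

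Finally, I would compute the second-highest bid on $g_{(u,v)}$. Since all buyers other than $b_u$ and $b_v$ bid $0$, the second-highest bid is exactly $\alpha_{b_u}$, and so $b_v$'s payment for $g_{(u,v)}$ equals $\alpha_{b_u}$. The argument is essentially a direct calculation; the only nontrivial input is the lower bound $\alpha_{b_v}\geq \kappa$ from \Cref{lemma: general range}, so there is no real obstacle beyond citing that lemma correctly.
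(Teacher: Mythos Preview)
Your proposal is correct and follows essentially the same approach as the paper's proof: both invoke \Cref{lemma: general range} to obtain $\alpha_{b_v}\ge\kappa$, then use $v_{b_v}(g_{(u,v)})=1/\kappa+1$ to conclude that $b_v$'s bid strictly exceeds $b_u$'s bid (and all other bids are zero), so conditions (a) and (b) force $b_v$ to win the whole unit at the second price $\alpha_{b_u}$. Your write-up is simply a more explicit version of the paper's two-line argument.
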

\begin{proof}
	By \Cref{lemma: general range}, we know that $\alpha_{b_v}\geq \kappa$. Then we have 
	$$\alpha_{b_v}\cdot v_{b_v}(g_{(u,v)})> v_{b_u}(g_{(u,v)})\geq \alpha_{b_u}\cdot v_{b_u}(g_{(u,v)}).$$ Since only buyers with the highest bid win the goods and pay the second price, we conclude that $b_v$ buys the whole unit of $g_{(u,v)}$ and pays $\alpha_{b_u}$ for $g_{(u,v)}$  (recall that in the construction, $v_{b_u}(g_{(u,v)})=1$).
\end{proof}

We are now ready to prove three lemmas, one for the correctness of each type of gate:

\begin{lemma}[Correctness of $\NOT$ gates]
	Let $\alpha$ be the multipliers of any $\gamma$-approximate pacing equilibrium of $G$. For any gate $(u,v,w,\NOT)$, if $\alpha_{b_u}=\kappa$, then $\alpha_{b_v}=1$; if $\alpha_{b_u}=1$, then $\alpha_{b_v}=\kappa$.
\end{lemma}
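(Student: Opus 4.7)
The plan is to handle the two implications separately, and in both cases leverage the two preceding lemmas as black boxes. First, by \Cref{lemma: potential winning goods} applied to $g_{(u,v)}$, buyer $b_v$ wins the whole unit of $g_{(u,v)}$ and pays exactly $\alpha_{b_u}$ on it. Second, the only other good on which $b_v$ places a positive bid is $g_v$, and since \Cref{lemma: general range} gives $\alpha_{c_v}=1$, the bid of $c_v$ on $g_v$ equals $1+\delta$ while the bid of $b_v$ on $g_v$ equals $\alpha_{b_v}\cdot (1+\delta)/\kappa$. Because the value ratio was designed to be exactly $\kappa$, buyer $b_v$ strictly outbids $c_v$ iff $\alpha_{b_v}>\kappa$, and ties iff $\alpha_{b_v}=\kappa$.

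For the first implication, I would assume $\alpha_{b_u}=\kappa$ and suppose towards a contradiction that $\alpha_{b_v}<1$. Then condition~(d) of $\gamma$-approximate pacing equilibrium forces the total expense of $b_v$ to be at least $(1-\gamma)\cdot 2$. On the other hand, the expense of $b_v$ is upper-bounded by $\kappa+(1+\delta)$: the first term is the payment $\alpha_{b_u}=\kappa$ on $g_{(u,v)}$, and the second term is the largest payment $b_v$ can possibly make on $g_v$, attained when $b_v$ wins the full unit at unit price $1+\delta$ (this dominates the tied case $\alpha_{b_v}=\kappa$, where $b_v$ only takes a fraction of $g_v$). Substituting $\kappa=3\delta/2$ and $\delta=1/3-\gamma$ reduces the needed inequality $\kappa+1+\delta<2(1-\gamma)$ to $\gamma<1/3$, which holds by hypothesis, giving the contradiction and forcing $\alpha_{b_v}=1$.

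For the second implication, I would assume $\alpha_{b_u}=1$ and suppose towards a contradiction that $\alpha_{b_v}>\kappa$. Then $b_v$ strictly outbids $c_v$ on $g_v$, so $b_v$ is the sole winner of $g_v$ and pays $1+\delta$ on it. Together with the payment of $\alpha_{b_u}=1$ on $g_{(u,v)}$, the total expense of $b_v$ equals $2+\delta$, strictly exceeding its budget $B_{b_v}=2$ and violating condition~(c). Hence $\alpha_{b_v}\le \kappa$, and combined with the lower bound $\alpha_{b_v}\ge \kappa$ from \Cref{lemma: general range} we conclude $\alpha_{b_v}=\kappa$.

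The proof is mostly bookkeeping once the two lemmas are in hand, and the only place where a nontrivial calculation appears is the strict inequality $\kappa+1+\delta<2(1-\gamma)$ underlying the first implication; this is precisely where the constraint $\gamma<1/3$ is consumed, and it is the main place where care is needed in the writeup.
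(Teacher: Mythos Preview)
Your proposal is correct and follows essentially the same argument as the paper's proof: upper-bound $b_v$'s expense by $\alpha_{b_u}+(1+\delta)$ and invoke condition~(d) for the first implication, and force a budget violation when $\alpha_{b_v}>\kappa$ for the second. One small imprecision: it is not true that ``the only other good on which $b_v$ places a positive bid is $g_v$''---$b_v$ may also bid on goods $g_{(v,\cdot)}$ when $v$ is an input to another gate---but by \Cref{lemma: potential winning goods} $b_v$ cannot \emph{win} any such good, so your expense bound is still valid; also, the inequality $\kappa+1+\delta<2(1-\gamma)$ actually simplifies to $\gamma>-1/3$ rather than $\gamma<1/3$, though of course it still holds.
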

\begin{proof}
	By \Cref{lemma: potential winning goods}, we know that $b_v$ must buy the whole unit of $g_{(u,v)}$, may buy some amount of $g_v$, and does not buy any amount of other goods. So $b_v$'s overall expense is at most $\alpha_{b_u}+v_{c_v}(g_v)$. Furthermore, if $\alpha_{b_v}>\kappa$, then $b_v$  wins the whole unit of  $g_v$ and its overall  expense equals $\alpha_{b_u}+v_{c_v}(g_v)$. 
	
	If $\alpha_{b_u}=\kappa$, then $b_v$'s expense is at most $\kappa+(1+\delta)$. Recall that $b_v$'s budget is 2. So $b_v$ spends at most $(1+\delta+\kappa)/2<(1-\gamma)$-fraction of its budget. Thus by the no unnecessary pacing condition, we have that $\alpha_{b_v}=1$.
	
	For the case when $\alpha_{b_u}=1$, assume for contradiction that $\alpha_{b_v}>\kappa$. Then $b_v$'s expense equals $\alpha_{b_u}+v_{c_v}(g_v)=2+\delta$, violating the budget constraint and leading to a contradiction. Combining with \Cref{lemma: general range} which shows $\alpha_{b_v}\geq \kappa$, we conclude that if $\alpha_{b_u}=1$, then $\alpha_{b_v}=\kappa$.
\end{proof}

\begin{lemma}[Correctness of $\NOR$ gates]
	Let $\alpha$ be the multipliers of any $\gamma$-approximate pacing equilibrium of $G$. For any gate $(u,v,w,\NOR)$, if $\alpha_{b_u}=\kappa$ and $\alpha_{b_v}=\kappa$, then $\alpha_{b_w}=1$; if $\alpha_{b_u}=1$ or $\alpha_{b_v}=1$, then $\alpha_{b_w}=\kappa$.
\end{lemma}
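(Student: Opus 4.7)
The plan is to mirror the structure of the $\NOT$-gate lemma, using \Cref{lemma: potential winning goods} as the workhorse that pins down $b_w$'s payments on the ``wire'' goods $g_{(u,w)}$ and $g_{(v,w)}$. By that lemma, $b_w$ buys the whole unit of each of $g_{(u,w)}$ and $g_{(v,w)}$, paying exactly $\alpha_{b_u}$ and $\alpha_{b_v}$ respectively (since $\alpha_{c_w}=1$ on the supporting auxiliary good $g_w$). The only remaining degree of freedom in $b_w$'s spending is the competition with $c_w$ on $g_w$: since $c_w$ bids $2-\delta$ and $b_w$ bids $\alpha_{b_w}\cdot(2-\delta)/\kappa$, we have $\alpha_{b_w}>\kappa\Rightarrow$ $b_w$ wins all of $g_w$ and pays the second price $2-\delta$, while $\alpha_{b_w}=\kappa$ gives a tie in which $b_w$ may receive any fraction of $g_w$.

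For the first half ($\alpha_{b_u}=\alpha_{b_v}=\kappa$), I would upper-bound $b_w$'s total expense by $2\kappa+(2-\delta)=2+2\delta$, using $\kappa=3\delta/2$. The key arithmetic check is then
\[
\frac{2+2\delta}{3}<1-\gamma\ \Longleftrightarrow\ 2\delta+3\gamma<1\ \Longleftrightarrow\ \tfrac{2}{3}+\gamma<1,
\]
which holds since $\gamma<1/3$ (using $\delta=1/3-\gamma$). Hence $b_w$'s spending is strictly less than $(1-\gamma)\cdot 3$, and the ``not too much unnecessary pacing'' clause forces $\alpha_{b_w}=1$.

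For the second half, suppose without loss of generality that $\alpha_{b_u}=1$ (the case $\alpha_{b_v}=1$ is symmetric). I would argue by contradiction: if $\alpha_{b_w}>\kappa$, then $b_w$ wins $g_w$ entirely and must pay $2-\delta$ there, plus $\alpha_{b_u}=1$ on $g_{(u,w)}$ and at least $\alpha_{b_v}\ge\kappa$ on $g_{(v,w)}$ by \Cref{lemma: general range}. The total is at least $1+\kappa+(2-\delta)=3+\delta/2>3$, violating budget feasibility. Combined with \Cref{lemma: general range} (which gives $\alpha_{b_w}\ge\kappa$), this forces $\alpha_{b_w}=\kappa$.

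The main obstacle is purely a bookkeeping one: one must confirm the two tight inequalities, namely $2+2\delta<3(1-\gamma)$ in Case 1 and $3+\delta/2>3$ in Case 2, and check that both directions exploit the slack $\gamma<1/3$ encoded through $\delta$ and $\kappa$. Nothing subtle happens with the auxiliary good $g_w$ beyond the dichotomy between ``$b_w$ strictly outbids $c_w$ and must pay $2-\delta$'' and ``$b_w$ ties and can flexibly absorb whatever fraction is needed'' — exactly the flexibility-at-ties mechanism highlighted in the technical overview.
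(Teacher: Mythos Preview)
Your proof is correct and follows essentially the same route as the paper: both use \Cref{lemma: potential winning goods} to fix the payments on $g_{(u,w)}$ and $g_{(v,w)}$, then split on whether $\alpha_{b_w}>\kappa$ forces $b_w$ to absorb the full $2-\delta$ on $g_w$, and verify the two inequalities $(2+2\delta)/3<1-\gamma$ and $1+\kappa+(2-\delta)>3$. Your arithmetic simplifications ($2\kappa+(2-\delta)=2+2\delta$ and $\kappa-\delta=\delta/2$) match the paper's computations exactly.
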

\begin{proof}
	By \Cref{lemma: potential winning goods}, we know that $b_w$ must buy the whole unit of $g_{(u,w)}$ and the whole unit of $g_{(v,w)}$, may buy some amount of $g_w$, and does not buy any amount of other goods. So $b_v$'s overall expense is at most $\alpha_{b_u}+\alpha_{b_v}+v_{c_w}(g_w)$. Furthermore, if $\alpha_{b_w}>\kappa$, then $b_w$ also wins the whole unit of  $g_w$, which implies that $b_w$'s expense equals $\alpha_{b_u}+\alpha_{b_v}+v_{c_w}(g_w)$.
	
	If $\alpha_{b_u}=\kappa$ and $\alpha_{b_v}=\kappa$, then $b_v$'s expense is at most $\kappa+\kappa+(2-\delta)$. Recall that $b_w$'s budget is 3. So $b_w$ spends at most $(2+2\kappa-\delta)/3<(1-\gamma)$-fraction of its budget. Thus by the no unnecessary pacing condition, we have that $\alpha_{b_w}=1$.
	
	For the case when $\alpha_{b_u}=1$ or $\alpha_{b_v}=1$, assume for a contradiction that $\alpha_{b_w}>\kappa$. Then $b_w$'s expense equals $\alpha_{b_u}+\alpha_{b_v}+v_{c_w}(g_w)\geq 1+\kappa+2-\delta=3+\kappa-\delta>3$, violating the budget constraint and leading to a contradiction. Combining with \Cref{lemma: general range} which shows $\alpha_{b_w}\geq \kappa$, we conclude that if $\alpha_{b_u}=1$ or $\alpha_{b_v}=1$, then $\alpha_{b_w}=\kappa$.
\end{proof}

\begin{lemma}[Correctness of $\NPURIFY$ gates]
	Let $\alpha$ be the multipliers of any $\gamma$-approximate pacing equilibrium of $G$. For any gate $(u,v,w,\NPURIFY)$, we have $\alpha_{b_v}=\kappa$ or $\alpha_{b_w}=1$. Furthermore, if $\alpha_{b_u}=\kappa$, then $\alpha_{b_v}=\alpha_{b_w}=1$ and if $\alpha_{b_u}=1$, then $\alpha_{b_v}=\alpha_{b_w}=\kappa$.
\end{lemma}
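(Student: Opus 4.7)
The plan is to mirror the structure of the previous two gate lemmas, using \Cref{lemma: potential winning goods} to pin down $b_v$'s and $b_w$'s expenses as functions of $\alpha_{b_u}$, $\alpha_{b_v}$, $\alpha_{b_w}$, and then perform case analysis. By \Cref{lemma: potential winning goods,lemma: general range}, $b_v$ wins the entire good $g_{(u,v)}$ paying $\alpha_{b_u}$ and may buy part of $g_v$ at unit price $v_{c_v}(g_v)=1-\delta/2$ (winning it fully if $\alpha_{b_v}>\kappa$); no other goods contribute to its expense. Hence $b_v$'s expense is at most $\alpha_{b_u}+1-\delta/2$, with equality when $\alpha_{b_v}>\kappa$. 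Symmetrically, $b_w$'s expense is at most $\alpha_{b_u}+1/2+\delta/2$, with equality when $\alpha_{b_w}>\kappa$. I would state these two bounds upfront and use them throughout.

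For the implication ``$\alpha_{b_u}=\kappa\Rightarrow \alpha_{b_v}=\alpha_{b_w}=1$'', I would plug $\alpha_{b_u}=\kappa=3\delta/2$ into the upper bounds and check that each is strictly below $(1-\gamma)\cdot 3/2=(2/3+\delta)\cdot 3/2$, so that the ``not too much unnecessary pacing'' condition forces the corresponding multiplier to $1$. The two inequalities reduce to $0<\delta$ (for $b_v$) and $\delta<1$ (for $b_w$), both of which follow from $\delta\in(0,1/3]$. For ``$\alpha_{b_u}=1\Rightarrow \alpha_{b_v}=\alpha_{b_w}=\kappa$'', I would suppose for contradiction that $\alpha_{b_v}>\kappa$ (respectively $\alpha_{b_w}>\kappa$), in which case the expense equals $2-\delta/2$ (respectively $3/2+\delta/2$), both strictly exceeding the budget $3/2$, contradicting budget feasibility. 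Combined with \Cref{lemma: general range}, this pins the multipliers to $\kappa$.

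The interesting, slightly non-routine part is the ``purification'' statement $\alpha_{b_v}=\kappa$ or $\alpha_{b_w}=1$. This is what I expect to be the main obstacle, because it requires linking $b_v$'s budget constraint and $b_w$'s unnecessary-pacing slack through the shared variable $\alpha_{b_u}$ rather than arguing about $b_v$ or $b_w$ alone. My plan is: assume for contradiction that $\alpha_{b_v}>\kappa$ and $\alpha_{b_w}<1$ simultaneously. From $\alpha_{b_v}>\kappa$, $b_v$'s expense equals $\alpha_{b_u}+1-\delta/2$, and feasibility forces $\alpha_{b_u}\le 1/2+\delta/2$. From $\alpha_{b_w}<1$, the ``not too much unnecessary pacing'' condition demands $b_w$'s expense to be at least $(1-\gamma)\cdot 3/2 = 1+3\delta/2$. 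But $b_w$'s expense is at most $\alpha_{b_u}+1/2+\delta/2 \le 1+\delta$, and $1+\delta<1+3\delta/2$ since $\delta>0$, a contradiction.

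Finally, I would remark that together with \Cref{lemma: general range}, the three implications above exactly enforce the $\NPURIFY$ constraint: part~1 guarantees $\{\bx[v],\bx[w]\}\cap\{0,1\}\neq\emptyset$ (via $\bx[v]=0$ or $\bx[w]=1$ in the decoding of \Cref{eq:convert}), while parts~2 and~3 guarantee that whenever $\bx[u]\in\{0,1\}$, both outputs are set to $\neg\bx[u]$. All calculations are elementary with the parameters $\kappa=3\delta/2$ and $1-\gamma=2/3+\delta$; no new machinery beyond \Cref{lemma: general range,lemma: potential winning goods} is required.
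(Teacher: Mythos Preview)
Your proposal is correct and follows essentially the same approach as the paper: both establish the same expense bounds for $b_v$ and $b_w$ from \Cref{lemma: potential winning goods}, then use budget feasibility to force multipliers down to $\kappa$ and the ``not too much unnecessary pacing'' condition to force them up to $1$. The only cosmetic difference is that the paper organizes the argument around a threshold on $\alpha_{b_u}$ (proving separately that $\alpha_{b_u}>1/2+\delta/2$ implies $\alpha_{b_v}=\kappa$ and $\alpha_{b_u}\le 1/2+\delta/2$ implies $\alpha_{b_w}=1$, which simultaneously yields purification and the remaining halves of the ``furthermore'' implications), whereas you phrase the purification step as a direct contradiction from assuming $\alpha_{b_v}>\kappa$ and $\alpha_{b_w}<1$; these are contrapositives of the same computation.
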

\begin{proof}
	By \Cref{lemma: potential winning goods}, we know that $b_v$ must buy the whole unit of $g_{(u,v)}$, may buy some amount of $g_v$, and does not buy any amount of other goods. So $b_v$'s overall expense is at most $\alpha_{b_u}+v_{c_v}(g_v)$. Furthermore, if $\alpha_{b_v}>\kappa$, then $b_v$ also wins the whole unit of $g_v$ and its expense equals $\alpha_{b_u}+v_{c_v}(g_v)$.	
	
	Similarly, we know that $b_w$ must buy the whole unit of $g_{(u,w)}$, may buy some amount of $g_w$, and does not buy any amount of other goods. So $b_w$'s overall expense is at most $\alpha_{b_u}+v_{c_w}(g_w)$. Furthermore, if $\alpha_{b_w}>\kappa$, then $b_w$ also wins the whole unit of $g_w$, which implies that $b_w$'s expense equals $\alpha_{b_u}+v_{c_w}(g_w)$.	
	
	We will show the following cases, given which the lemma directly follows.
	\begin{itemize}
		\item When $\alpha_{b_u}=\kappa$, we have $\alpha_{b_v}=1$;
		\item When $\alpha_{b_u}\in(1/2+\delta/2,1]$, we have $\alpha_{b_v}=\kappa$;
		\item When $\alpha_{b_u}\in[\kappa,1/2+\delta/2]$, we have $\alpha_{b_w}=1$;
		\item When $\alpha_{b_u}=1$, we have $\alpha_{b_w}=\kappa$.
	\end{itemize}
	
	When $\alpha_{b_u}=\kappa$, $b_v$'s expense is at most $\kappa+(1-\delta/2)$. Recall that $b_v$'s budget is $3/2$. So $b_v$ spends at most $2(\kappa+1-\delta/2)/3<(1-\gamma)$-fraction of its budget. So by the no unnecessary pacing condition, we have that $\alpha_{b_v}=1$.

	When $\alpha_{b_u}>1/2+\delta/2$, assume for a contradiction that $\alpha_{b_v}>\kappa$. Then $b_v$'s expense equals $\alpha_{b_u}+v_{c_v}(g_v)>1/2+\delta/2+1-\delta/2=3/2$, violating the budget constraint. Combining with \Cref{lemma: general range}  we conclude that if $\alpha_{b_u}>1/2+\delta/2$, then $\alpha_{b_v}=\kappa$.
	
	When $\alpha_{b_u}\leq 1/2+\delta/2$, $b_w$'s expense is at most $1/2+\delta/2+1/2+\delta/2=1+\delta$. Recall that $b_w$'s budget is $3/2$. So $b_w$ spends at most $2(1+\delta)/3<(1-\gamma)$-fraction of its budget. Thus by the no unnecessary pacing condition, we have that $\alpha_{b_w}=1$.
	
	When $\alpha_{b_u}=1$, we assume for a contradiction that $\alpha_{b_w}>\kappa$. Then $b_w$'s overall expense equals $\alpha_{b_u}+v_{c_w}(g_w)=1+1/2+\delta/2=3/2+\delta/2$, violating the budget constraint. Again combining with \Cref{lemma: general range}, we conclude that if $\alpha_{b_u}=1$, then $\alpha_{b_w}=\kappa$.
\end{proof}
Combining  all three lemmas above, it is clear that given the multipliers $\alpha$ of any $\gamma$-approximate pacing equilibrium of $G$, we can extract a solution to the given $\purec$ instance in polynomial time. This finishes the proof of \Cref{theorem: main}.

\section{Discussion and Open Questions}
We showed constant inapproximability of pacing equilibrium in second-price auctions, ruling out PTAS for this problem unless $\PPAD=\FP$. However, the tight approximabilities of equilibria in auto-bidding settings remain largely open, and they are worth further investigation. We list a few concrete directions below:

\begin{itemize}
	\item Is there a polynomial-time algorithm with a non-trivial approximation guarantee, for example, one that computes a 0.99-approximate pacing equilibrium for second-price auctions?
	\item Is the 1/3 bound tight for our constructions? We do not know whether there is a polynomial-time algorithm that can find a 1/3-approximate pacing equilibrium for the family of instances we constructed. Slightly more broadly, is there a 1/3-approximate algorithm when each bidder bids on at most four items?
	\item It is known that when $n$ is a constant, there is a polynomial-time algorithm that can find an exact pacing equilibrium~\cite{yangliu2025}. What about the case where $m$ is a constant?
	\item What is the tight approximability of throttling equilibria in second-price auctions? 
	\item It is known that an exact pacing equilibrium for first-price auctions can be computed in time $\poly(n,m)$~\cite{conitzer2022pacing}. Is there an algorithm that computes $\delta$-approximate throttling equilibrium for first-price auctions in time $\poly(n,m,\log(1/\delta))$, improving on the previous $\poly(n,m,1/\delta)$ algorithm~\cite{CKK21throttling}?
\end{itemize}

\begin{flushleft}
\bibliographystyle{alpha}
\bibliography{ref}
\end{flushleft}
\appendix
\section{Hardness of a weaker approximation}
\label{section: variant chen2024complexity}

We recall the following weaker notion of approximate pacing equilibria defined in \cite{chen2024complexity}:

\begin{definition}[Approximate Pacing Equilibria] \label{definition_pacing_equilibrium_1}
Given a second-price pacing game $G = (n, m, (v_{ij}),(B_i))$ and parameters $\sigma, \gamma, \tau \in [0,1)$, we say $(\alpha, x)$, with $\alpha=(\alpha_i)\in [0,1]^n$, $x=(x_{ij}) \in [0,1]^{nm}$
  and $\sum_{i\in [n]} x_{ij}\le 1$ for all $j\in [m]$, is
a $(\sigma,\gamma,\tau)$-\emph{approximate pacing equilibrium} of $G$ if  \vspace{0.1 cm}
\begin{flushleft}
\begin{itemize}
    \item[(a)] Only buyers \emph{close} to the highest bid win the good: %
    $x_{ij} > 0$ implies $\alpha_i v_{ij} \geq (1 - \sigma) h_j(\alpha)$.\vspace{0.1 cm}
    \item[(b)] Full allocation of each good with a positive bid: $h_j(\alpha) > 0$ implies 
      $\sum_{i\in [n]} x_{ij} = 1$.\vspace{0.1 cm}
    \item[(c)] Budgets are satisfied: $\sum_{j\in [m]} x_{ij} p_j(\alpha) \leq B_i$.\vspace{0.1 cm}
	\item[(d)] \emph{Not too much} unnecessary pacing:  $\sum_{j\in [m]} x_{ij} p_j(\alpha) < (1 - \gamma)B_i$ implies $\alpha_i \geq 1 - \tau$.\vspace{0.15cm}
\end{itemize}
\end{flushleft}
\end{definition}

Our main result in this section is constant inapproximability for this even more related approximate pacing equilibrium.

\begin{theorem}\label{thm: approx in CKK}
	For any constants $\sigma\leq 1/20,\gamma\leq 1/20,\tau \leq 1/20$, computing a $(\sigma,\gamma,\tau)$-approximate pacing equilibrium in second-price pacing games is $\PPAD$-hard. This holds even when every buyer has a non-zero value on at most four goods.
\end{theorem}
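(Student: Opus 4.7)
The plan is to lift the reduction of \Cref{section: proof of main theorem} by rescaling its numerical parameters to absorb three kinds of slack simultaneously: $\sigma$ on the winner set, $\gamma$ on the unnecessary-pacing threshold, and $\tau$ on the forced-multiplier value. I would keep the same skeleton \hspace{0.05cm}---\hspace{0.05cm} one buyer $b_v$ per node $v$, auxiliary ``anchor'' goods $g_v$ paired with rich buyers $c_v$, and ``transmission'' goods $g_{(u,v)}$ that carry the value of $\alpha_{b_u}$ into $b_v$'s payment \hspace{0.05cm}---\hspace{0.05cm} but pick a larger $\delta$ (for instance $\delta=1/5$), redefine $\kappa$ accordingly, and inflate both $v_{b_v}(g_{(u,v)})/v_{b_u}(g_{(u,v)})$ and $v_{c_v}(g_v)/v_{b_v}(g_v)$ by multiplicative factors $\Theta(1)$ above their original minimum $1/\kappa$. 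The effect is that every strict inequality used in \Cref{section: proof of main theorem} retains an $\Omega(1)$ safety margin against $\sigma,\gamma,\tau\le 1/20$.

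The decoding in \eqref{eq:convert} is modified to $\bx[v]=1$ if $\alpha_{b_v}\ge 1-2\tau$, $\bx[v]=0$ if $\alpha_{b_v}\le \kappa/(1-\sigma)$, and $\bx[v]=\bot$ otherwise. The analogue of \Cref{lemma: general range} is re-proved by noting that under $(\sigma,\gamma,\tau)$-approximation, $b_v$ can be in the winner set of its anchor good $g_v$ only when $\alpha_{b_v}\ge (1-\sigma)\kappa$. So the contradiction hypothesis becomes $\alpha_{b_v}<(1-\sigma)\kappa$: then $b_v$ wins no share of $g_v$, its total spend is bounded by the sum of values on its transmission goods, and with the chosen parameters this bound is strictly below $(1-\gamma)B_{b_v}$. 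Condition (d) then forces $\alpha_{b_v}\ge 1-\tau$, which contradicts the hypothesis since $(1-\sigma)\kappa$ is bounded away from $1-\tau$.

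The analogue of \Cref{lemma: potential winning goods} follows from the inflated ratio in Step~1: we get $\alpha_{b_v}v_{b_v}(g_{(u,v)})>\alpha_{b_u}v_{b_u}(g_{(u,v)})/(1-\sigma)$, so $b_u$ is excluded from the winner set, $b_v$ takes the whole unit, and pays the unique second bid $\alpha_{b_u}$ (which now lies in $[1-\tau,1]$ on the ``high'' side rather than equals $1$). Armed with these two lemmas, the three gate-correctness lemmas are re-verified in the same format as Section~\ref{section: proof of main theorem}. On the ``low-input $\Rightarrow$ high-output'' side, the output buyer's spend is bounded by $\kappa/(1-\sigma) + v_{c_\cdot}(g_\cdot) + O(\sigma+\tau)$, which is checked to remain strictly below $(1-\gamma)B$, so (d) yields $\alpha_{b_v}\ge 1-\tau$. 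On the ``high-input $\Rightarrow$ low-output'' side, assuming $\alpha_{b_v}>\kappa/(1-\sigma)$ makes $b_v$ the sole winner of $g_v$ and drives its spend above $B_{b_v}$, contradicting (c). The $\NPURIFY$ gate with its asymmetric thresholds $1/2\pm\delta/2$ is the tightest case: one must split the analysis at input $1/2$ and verify constant slack in each subcase.

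The main obstacle is exactly the parameter choice in Step~1: a single pair $(\delta,\kappa)$ must make the full system of linear inequalities \hspace{0.05cm}---\hspace{0.05cm} low spend strictly below $(1-\gamma)B$, high spend strictly above $B$, and anchor/transmission bid ratios strictly above $1/(1-\sigma)$ \hspace{0.05cm}---\hspace{0.05cm} have an $\Omega(1)$ safety margin \emph{simultaneously} for all three gate types under the worst case $\sigma=\gamma=\tau=1/20$. The constant $1/20$ in the theorem is essentially the largest value for which a routine (but tedious) back-of-the-envelope check makes this system consistent across all gates, and we do not expect it to be tight. Finally, the bound ``non-zero value on at most four goods'' is inherited from \Cref{theorem: main} because the inflation only changes the numerical values, not which buyer bids on which good.
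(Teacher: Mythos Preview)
Your high-level plan matches the paper: keep the gadget skeleton of \Cref{section: proof of main theorem}, widen the decoding from exact values to intervals, and re-derive the analogues of Lemmas~\ref{lemma: general range}--\ref{lemma: potential winning goods} and the three gate-correctness lemmas under the three extra slacks. That is exactly what the paper does. The gap is in how you propose to choose parameters. You suggest retaining Section~3's budgets $(2,3,3/2)$ and anchor-value formulas $(1+\delta,\,2-\delta,\,1-\delta/2,\,1/2+\delta/2)$ and only enlarging $\delta$ (and inflating the two bid ratios). But several of Section~3's inequalities are tight by design in the exact case---most acutely, in the $\NPURIFY$ subcase ``$\alpha_{b_u}>1/2+\delta/2\Rightarrow\alpha_{b_v}=\kappa$'' the over-budget margin is exactly $0^+$---and once $\alpha_{c_v}$ is only guaranteed to be $\ge 1-\tau$, the anchor second price drops by $\Theta(\tau)$, which no choice of $\delta$ can uniformly absorb across gates. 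Concretely, at $\sigma=\gamma=\tau=1/20$ with $\kappa=3\delta/2$, the NOT gate's ``low input $\Rightarrow$ high output'' check forces $\delta\lesssim 0.35$ while the NOR gate's ``one high input $\Rightarrow$ low output'' check forces $\delta\gtrsim 0.37$, so no single $\delta$ works. Inflating the anchor ratio only makes the NOR case worse, since it lowers the range-lemma floor on the ``other'' input.

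The paper's fix is to drop the $(\delta,\kappa)$ parameterization entirely and write down fresh ad-hoc numbers per gate type---budgets $1.5,\,2.5,\,1.8,\,2.8$; anchor ratios $9,\,18,\,13.5,\,18$; low interval $[0.1,0.15]$; high interval $[0.9,1]$---chosen with generous slack so that every inequality can be verified by one-line arithmetic. Your outline becomes correct once you allow yourself that same freedom; the ``routine (but tedious) check'' you defer is precisely where the construction has to change, not merely be rescaled.
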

We note that the purpose for \Cref{thm: approx in CKK} is to show constant inapproximability for an even more relaxed definition of approximation studied in the literature. The constants in \Cref{thm: approx in CKK} are crude, and we did not manage to optimize them. Nevertheless, it is sufficient to rule out PTAS for approximate pacing equilibrium, assuming $\PPAD\neq \FP$.

\subsection{Proof of \Cref{thm: approx in CKK}}
\paragraph{Variable encodings.} For every variable $v$ in the $\purec$ instance, we will introduce a buyer, denoted by $b_v$, whose pacing multiplier $\alpha_{b_v}$ will encode that variable. In particular, our construction will make sure that $\alpha_{b_v}\geq 0.1$ for all $v\in[n]$ in any $(1/20,1/20,1/20)$-approximate pacing equilibrium (\Cref{lemma: general range ckk}). Given a vector of the pacing multipliers $\alpha$, we will extract an assignment for the $\purec$ instance as follows: 
\begin{equation*}
	\bx[v]\coloneqq
	\left\{ \begin{array}{ll}
	0 & \text{if $\alpha_{b_v}\in[0.1,0.15]$};\\
	1 & \text{if $\alpha_{b_v}\in [0.9,1]$};\\
	\bot & \text{otherwise}.
	\end{array}
	\right.
\end{equation*}

We construct the $\NOT$, $\NOR$, and $\NPURIFY$ gates below, after which we will prove their correctness. 

\paragraph{$\NOT$ gates.}
Fix a $\NOT$ gate $(u,v,w,\NOT)$ (where $w$ is unused) in $C$. 
We set the budget of $b_v$ to be 1.5.

For the buyer $b_v$, we create an auxiliary buyer $c_v$ and create a good $g_v$ for them, namely, only $b_v$ and $c_v$ have non-zero values on the good $g_v$. In particular, we let
\begin{equation*}
	v_{i}(g_{v})\coloneqq
	\left\{ \begin{array}{ll}
	9 & \text{if $i=b_v$};\\
	1 & \text{if $i=c_v$};\\
	0 & \text{otherwise}.
	\end{array}
	\right.
\end{equation*}
Note that in the construction above we intend to make sure $0.1\cdot v_{b_v}(g_{v})\leq (1-\sigma)v_{c_v}(g_{v})$, where $0.1$ is the intended lower bound for $\alpha_{b_v}$. We set the budget of $c_v$ to be $1000$. (Throughout this reduction, 1000 represents a number that is large enough.)

We then construct a good $g_{(u,v)}$. The values of $g_{(u,v)}$ to the buyers are as follows.

\begin{equation*}
	v_{i}(g_{(u,v)})\coloneqq
	\left\{ \begin{array}{ll}
	1 & \text{if $i=b_u$};\\
	1000 & \text{if $i=b_v$};\\
	0 & \text{otherwise}.
	\end{array}
	\right.
\end{equation*}
The key property is that $v_{b_v}(g_{(u,v)})$ is large enough so that in approximate equilibria, the buyer $b_v$ always wins the goods $g_{(u,v)}$ and the payment on this good is $\alpha_{b_u}$.

\paragraph{NOR gates.}

Fix a $\NOR$ gate $(u,v,w,\NOR)$ in $C$. 
We set the budget of $b_w$ to be $2.5$.

For the buyer $b_w$, we create an auxiliary buyer $c_w$ and create a good $g_w$ for them, namely, only $b_w$ and $c_w$ have non-zero values on the good $g_w$. In particular, we let 
\begin{equation*}
	v_{i}(g_{w})\coloneqq
	\left\{ \begin{array}{ll}
	18 & \text{if $i=b_w$};\\
	2 & \text{if $i=c_w$};\\
	0 & \text{otherwise}.
	\end{array}
	\right.
\end{equation*}
Again, we intend to make sure that $0.1\cdot v_{b_w}(g_{w})\leq (1-\sigma)v_{c_w}(g_{w})$, where $0.1$ is the intended lower bound for $\alpha_{b_w}$. We set the budget of $c_w$ to be $1000$.

We then construct two goods $g_{(u,w)}$ for $b_u$ and $b_w$ and $g_{(v,w)}$ for $b_v$ and $b_w$ respectively. The values of these goods to the buyers are as follows.

\begin{equation*}
	v_{i}(g_{(u,w)})\coloneqq
	\left\{ \begin{array}{ll}
	1 & \text{if $i=b_u$};\\
	1000 & \text{if $i=b_w$};\\
	0 & \text{otherwise}.
	\end{array}
	\right.
	\quad \text{and} \quad
	v_{i}(g_{(v,w)})\coloneqq
	\left\{ \begin{array}{ll}
	1 & \text{if $i=b_v$};\\
	1000 & \text{if $i=b_w$};\\
	0 & \text{otherwise}.
	\end{array}
	\right.
\end{equation*}
The key property is that $v_{b_w}(g_{(u,w)})$ and $v_{b_w}(g_{(v,w)})$ are large enough so that in approximate equilibria, the buyer $b_w$ always wins the goods $g_{(u,w)}$ and $g_{(v,w)}$ and the payment on these two goods is $\alpha_{b_u}+\alpha_{b_v}$.

\paragraph{NPURIFY gates.}
Fix a $\NPURIFY$ gate $(u,v,w,\NPURIFY)$. 
We set the budget of $b_v$ to be $1.8$, and set the budget of $b_w$ to be $2.8$.

For the buyer $b_v$, we create an auxiliary buyer $c_v$ and create a good $g_v$ for them, namely, only $b_v$ and $c_v$ have non-zero values on the good $g_v$. Similarly, for the buyer $b_w$, we create an auxiliary buyer $c_w$ and create a good $g_w$ for them, namely, only $b_w$ and $c_w$ have non-zero values on the good $g_w$. In particular, we let  
\begin{equation*}
	v_{i}(g_{v})\coloneqq
	\left\{ \begin{array}{ll}
	13.5 & \text{if $i=b_v$};\\
	1.5 & \text{if $i=c_v$};\\
	0 & \text{otherwise}.
	\end{array}
	\right.
	\quad \text{and} \quad
	v_{i}(g_{w})\coloneqq
	\left\{ \begin{array}{ll}
	18 & \text{if $i=b_w$};\\
	2 & \text{if $i=c_w$};\\
	0 & \text{otherwise}.
	\end{array}
	\right.
\end{equation*}
Similarly, we intend to make sure that $0.1\cdot v_{b_v}(g_{v})\leq (1-\sigma)v_{c_v}(g_{v})$ and $0.1\cdot v_{b_w}(g_{w})\leq (1-\sigma)v_{c_w}(g_{w})$. We set the budget of $c_v$ and $c_w$ to be $1000$.

We then construct two goods: $g_{(u,v)}$ for $b_u$ and $b_v$, and $g_{(u,w)}$ for $b_u$ and $b_w$ respectively. The values of these two goods to the buyers are as follows.

\begin{equation*}
	v_{i}(g_{(u,v)})\coloneqq
	\left\{ \begin{array}{ll}
	1 & \text{if $i=b_u$};\\
	1000 & \text{if $i=b_v$};\\
	0 & \text{otherwise}.
	\end{array}
	\right.
	\quad \text{and} \quad
	v_{i}(g_{(u,w)})\coloneqq
	\left\{ \begin{array}{ll}
	1 & \text{if $i=b_u$};\\
	1000 & \text{if $i=b_w$};\\
	0 & \text{otherwise}.
	\end{array}
	\right.
\end{equation*}
The key property is that $v_{b_v}(g_{(u,v)})$ and $v_{b_w}(g_{(u,w)})$ are large enough so that in approximate equilibria, the buyer $b_v$ always wins the goods $g_{(u,v)}$ and its payment on this good is $\alpha_{b_u}$; and the buyer $b_w$ always wins the goods $g_{(u,w)}$ and its payment on this good is $\alpha_{b_u}$.

\subsection{Proof of Correctness}
\begin{lemma}\label{lemma: general range ckk}
	Let $\alpha$ be the multipliers of any $(1/20,1/20,1/20)$-approximate pacing equilibrium of the constructed second-price pacing game. We have $\alpha_{c_v}\geq 1-\tau\geq 19/20$ and $\alpha_{b_v}\geq 0.1$ for all $v\in[n]$.
\end{lemma}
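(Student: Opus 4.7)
The plan is to mirror the two-step structure of \Cref{lemma: general range} from the main proof, carefully using the fact that condition~(a) of \Cref{definition_pacing_equilibrium_1} now allows near-highest bidders to win. First I would dispatch $\alpha_{c_v}\geq 1-\tau$: each auxiliary buyer $c_v$ (or $c_w$) has budget $1000$ but its only non-zero-valued good is $g_v$ (or $g_w$), whose value is at most $2$ across the three gate types. Hence no matter which bids match the highest, $c_v$'s total second-price expenditure is at most $2$, which is far below $(1-\gamma)\cdot 1000=950$. Applying condition~(d) forces $\alpha_{c_v}\geq 1-\tau\geq 19/20$ for every $v\in[n]$.

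Next, to show $\alpha_{b_v}\geq 0.1$, I would assume for contradiction that $\alpha_{b_v}<0.1$ (or $\alpha_{b_w}<0.1$ in the $\NOR/\NPURIFY$ cases where $b_w$ is the output) and argue that $b_v$ cannot win any fraction of its associated auxiliary good $g_v$. The point is that the highest bid on $g_v$ is at least $\alpha_{c_v}\cdot v_{c_v}(g_v)\geq (19/20)\cdot v_{c_v}(g_v)$, so by condition~(a), winning any portion of $g_v$ requires
\[
\alpha_{b_v}\cdot v_{b_v}(g_v) \;\geq\; (1-\sigma)(1-\tau)\cdot v_{c_v}(g_v) \;\geq\; (19/20)^2\cdot v_{c_v}(g_v).
\]
The numbers in the construction are chosen exactly so that $0.1\cdot v_{b_v}(g_v)<(19/20)^2\cdot v_{c_v}(g_v)$ for each gate type: $0.9<0.9025$ for $\NOT$, $1.8<1.805$ for $\NOR$, and $1.35<1.35375$ as well as $1.8<1.805$ for the two outputs of $\NPURIFY$. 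Thus $b_v$ gets no fraction of $g_v$.

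Once $g_v$ is excluded, the only remaining goods on which $b_v$ has positive value are the gate-edge goods of the form $g_{(u,v)}$, each of which gives a second-price payment of at most $v_{b_u}(g_{(u,v)})=1$. A node has out-degree at most the bidder-index constraints allow, but here it is the incoming edges (the gate of which $v$ is the output) that determine which $g_{(u,\cdot)}$ goods $b_v$ may win; this number is $1$ for $\NOT$ and $\NPURIFY$ outputs and $2$ for $\NOR$ outputs. I would then verify, case by case, that the resulting total expense upper bound is strictly below $(1-\gamma)$ times the chosen budget: $1<(19/20)\cdot 1.5$ for $\NOT$; $2<(19/20)\cdot 2.5$ for $\NOR$; $1<(19/20)\cdot 1.8$ and $1<(19/20)\cdot 2.8$ for $\NPURIFY$. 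Applying condition~(d) then yields $\alpha_{b_v}\geq 1-\tau\geq 19/20>0.1$, contradicting the hypothesis $\alpha_{b_v}<0.1$.

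The main subtlety, and the only step that is not immediate bookkeeping, is handling the extra slack introduced by condition~(a) being relaxed from equality to a $(1-\sigma)$ factor: one must multiply by both $(1-\sigma)$ and $(1-\tau)$ when lower-bounding the highest bid on $g_v$, and this is exactly where the tight numerical gaps like $0.9$ versus $(19/20)^2=0.9025$ come in. Once those inequalities are verified for all three gate types, the rest is a routine second-price budget estimate closely parallel to the proof of \Cref{lemma: general range}.
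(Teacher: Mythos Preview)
Your proposal is correct and follows essentially the same two-step argument as the paper: first pin down $\alpha_{c_v}\ge 1-\tau$ via the large budget of $c_v$, then assume $\alpha_{b_v}<0.1$, show $b_v$ is shut out of $g_v$ by the relaxed condition~(a), and derive a budget-fraction contradiction via condition~(d). You are in fact slightly more explicit than the paper in spelling out the $(1-\sigma)(1-\tau)$ factor and verifying the tight numerical gaps (e.g.\ $0.9<0.9025$); both you and the paper silently ignore the goods $g_{(v,\cdot)}$ on which $b_v$ has value $1$ as an input, but this is harmless since the second price there is at most $\alpha_{b_v}<0.1$ and does not disturb the budget comparison.
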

\begin{proof}
	We first note that in the construction of all gates, $c_v$'s budget is so large that by the no unnecessary pacing condition, we have $\alpha_{c_v}\geq 1-\tau$ for all $v\in[n]$. Now we show that $\alpha_{b_v}\geq 0.1$ for all $v\in[n]$. At a high level, the reasoning is exactly the same for the three different gates, namely, if $\alpha_{b_v}<0.1$, then $b_v$ would not be able to buy $g_v$, which (we will show) implies that $b_v$ will spend less than $19/20=(1-\gamma)$ fraction of their budget, leading to a contradiction.
	
	\textbf{$\NOT$ gates.} Assume for contradiction that $\alpha_{b_v}<0.1$. Then we know that $\alpha_{b_v}\cdot v_{b_v}(g_v)<(1-\sigma)\alpha_{c_v}\cdot v_{c_v}(g_v)$, which means that $b_v$ cannot spend any money on the good $g_v$. Then, $b_v$'s expense is at most $v_{b_u}(g_{(u,v)})=1$ (since it is a second-price auction). Note that the budget of $b_v$ is 1.5. Thus, $b_v$'s expense is at most $2/3<(1-\gamma)=19/20$ fraction of their budget, leading to a contradiction.
	
	\textbf{$\NOR$ gates.} Assume for contradiction that $\alpha_{b_w}<0.1$. Then we know that $\alpha_{b_w}\cdot v_{b_w}(g_w)<(1-\sigma)\alpha_{c_w}\cdot v_{c_w}(g_w)$, which means that $b_w$ cannot spend any money on the good $g_w$. Then, $b_w$'s expense is at most $v_{b_u}(g_{(u,w)})+v_{b_v}(g_{(v,w)})=2$ (since it is a second price auction). Note that the budget of $b_w$ is $2.5$. Thus, $b_w$'s expense is at most $4/5<(1-\gamma)=19/20$ fraction of their budget, leading to a contradiction.
	
	\textbf{$\NPURIFY$ gates.} 
	Assume for contradiction that $\alpha_{b_v}<0.1$. Then we know that $\alpha_{b_v}\cdot v_{b_v}(g_v)<(1-\sigma)\alpha_{c_v}\cdot v_{c_v}(g_v)$, which means that $b_v$ cannot spend any money on the good $g_v$. Then, $b_v$'s expense is at most $v_{b_u}(g_{(u,v)})=1$ (since it is a second-price auction). Note that the budget of $b_v$ is $1.8$. Thus, $b_v$'s expense is at most $10/18<(1-\gamma)=19/20$ fraction of their budget, leading to a contradiction.
	
	Assume for contradiction that $\alpha_{b_w}<0.1$. Then we know that $\alpha_{b_w}\cdot v_{b_w}(g_w)<(1-\sigma)\alpha_{c_w}\cdot v_{c_w}(g_w)$, which means that $b_w$ cannot spend any money on the good $g_w$. Then, $b_w$'s expense is at most $v_{b_u}(g_{(u,w)})=1$ (since it is a second-price auction). Note that the budget of $b_w$ is $2.5$. Thus, $b_w$'s expense is at most $2/5<(1-\gamma)=19/20$ fraction of their budget, leading to a contradiction.
	 
	 This finishes the proof.
\end{proof}
\begin{lemma}\label{lemma: potential winning goods ckk}
	Let $\alpha$ be the multipliers of any $(1/20,1/20,1/20)$-approximate pacing equilibrium of the constructed second-price pacing game. For any good $g_{(u,v)}$, the buyer $b_v$ buys the whole unit of $g_{(u,v)}$ and the expense on $g_{(u,v)}$ is $\alpha_{b_u}$.
\end{lemma}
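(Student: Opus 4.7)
The plan is to mimic the proof of \Cref{lemma: potential winning goods}, but using the relaxed equilibrium condition (a) of \Cref{definition_pacing_equilibrium_1}, which requires a winning buyer's bid to be only within a factor $(1-\sigma)$ of the highest bid rather than exactly equal to it. The large value $v_{b_v}(g_{(u,v)})=1000$ built into the construction is specifically designed to give enough slack to overwhelm this multiplicative relaxation.

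First, I invoke \Cref{lemma: general range ckk} to obtain $\alpha_{b_v}\geq 0.1$. Hence $b_v$'s bid on $g_{(u,v)}$ is at least $0.1\cdot 1000=100$, and in particular the highest bid satisfies $h_{g_{(u,v)}}(\alpha)\geq 100>0$. Next I consider any other buyer $i\neq b_v$ who might win a positive fraction. By construction only $b_u$ and $b_v$ have non-zero value on $g_{(u,v)}$, so the only candidate is $b_u$, whose bid is $\alpha_{b_u}\cdot 1\leq 1$. For $b_u$ to receive a positive allocation, condition~(a) of \Cref{definition_pacing_equilibrium_1} would require
\[
\alpha_{b_u}\cdot v_{b_u}(g_{(u,v)})\geq (1-\sigma)\,h_{g_{(u,v)}}(\alpha)\geq \tfrac{19}{20}\cdot 100=95,
\]
which is impossible since the left-hand side is at most $1$. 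Thus $x_{b_u,g_{(u,v)}}=0$, and every other buyer is likewise barred (either by the same argument or by having zero value and thus zero bid).

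Now I apply condition~(b) (full allocation): since $h_{g_{(u,v)}}(\alpha)>0$, the allocations on $g_{(u,v)}$ sum to $1$, and since all other buyers receive nothing, $b_v$ must receive the whole unit, i.e., $x_{b_v,g_{(u,v)}}=1$. Finally, the payment is determined by the second-highest bid. The only non-zero bid on $g_{(u,v)}$ apart from $b_v$'s is $b_u$'s bid $\alpha_{b_u}\cdot 1 = \alpha_{b_u}$, so $p_{g_{(u,v)}}(\alpha)=\alpha_{b_u}$ and $b_v$'s expense on this good equals $\alpha_{b_u}$, completing the proof.

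No step is really hard; the only subtlety is verifying that $(1-\sigma)=19/20$ is not large enough to let $b_u$ sneak into the winning set, which is exactly why the values were set to ratio $1:1000$ (with $0.1\cdot 1000\cdot 19/20 = 95 \gg 1$ providing ample room). The routine ingredients are the lower bound from \Cref{lemma: general range ckk} and the bookkeeping of second-price payment in a two-bidder auction where all but one bid is zero.
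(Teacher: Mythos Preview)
Your proof is correct and follows essentially the same approach as the paper's: invoke \Cref{lemma: general range ckk} for $\alpha_{b_v}\ge 0.1$, compare $b_v$'s bid on $g_{(u,v)}$ against $b_u$'s to show $b_u$ fails the relaxed condition~(a), then read off full allocation from condition~(b) and the second-price payment $\alpha_{b_u}$. Your write-up is simply more explicit about the numerics and about handling zero-value buyers.
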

\begin{proof}
	By \Cref{lemma: general range ckk}, we know that $\alpha_{b_v}\geq 0.1$. Then $\alpha_{b_v}\cdot v_{b_v}(g_{(u,v)})> v_{b_u}(g_{(u,v)})/(1-\sigma)\geq \alpha_{b_u}\cdot v_{b_u}(g_{(u,v)})/(1-\sigma).$ Since only buyers whose bids are close to the highest bid win the goods and pay the second price, we conclude that the buyer $b_v$ buys the whole unit of $g_{(u,v)}$ and the expense on $g_{(u,v)}$ is $\alpha_{b_u}$ (recall that in the construction, $v_{b_u}(g_{(u,v)})=1$).
\end{proof}

\begin{lemma}[Correctness of $\NOT$ gates]
	Let $\alpha$ be the multipliers of any $(1/20,1/20,1/20)$-approximate pacing equilibrium of the constructed second-price pacing game. For any $\NOT$ gate $(u,v)$, if $\alpha_{b_u}\leq 0.15$, then $\alpha_{b_v}\geq 0.9$; if $\alpha_{b_u}\geq 0.9$, then $\alpha_{b_v}\leq 0.15$.
\end{lemma}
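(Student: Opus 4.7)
The plan is to mirror the two-case argument used for $\NOT$ gates in the main proof in Section~\ref{section: proof of main theorem}, adapting it to the weaker approximate definition with $\sigma,\gamma,\tau\leq 1/20$. By \Cref{lemma: potential winning goods ckk}, $b_v$ must buy the whole unit of $g_{(u,v)}$ at price exactly $\alpha_{b_u}$, and the only other good on which $b_v$ has positive value is $g_v$; so $b_v$'s total spending equals $\alpha_{b_u}$ plus $x_{b_v,g_v}\cdot p_{g_v}(\alpha)$.

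First, I would handle the case $\alpha_{b_u}\leq 0.15$ by upper-bounding the total expense of $b_v$. The contribution from $g_{(u,v)}$ is at most $0.15$, while the contribution from $g_v$ is at most $p_{g_v}(\alpha)\leq \min(9\alpha_{b_v},\alpha_{c_v})\leq 1$, since $c_v$'s bid is $\alpha_{c_v}\cdot 1\leq 1$. Thus the total expense is at most $1.15$, which is strictly less than $(1-\gamma)B_{b_v}=(19/20)\cdot 1.5=1.425$. By the ``not too much unnecessary pacing'' condition, this forces $\alpha_{b_v}\geq 1-\tau\geq 19/20\geq 0.9$.

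Second, I would handle the case $\alpha_{b_u}\geq 0.9$ by contradiction. Assuming $\alpha_{b_v}>0.15$, the bid of $b_v$ on $g_v$ is $9\alpha_{b_v}>1.35$, so $(1-\sigma)\cdot 9\alpha_{b_v}>(19/20)\cdot 1.35>1.28$, which strictly exceeds $c_v$'s bid $\alpha_{c_v}\leq 1$. The approximate ``only close-to-highest bid wins'' condition then rules out $x_{c_v,g_v}>0$, so full-allocation of $g_v$ (condition (b), since $h_{g_v}(\alpha)>0$) makes $b_v$ buy the whole unit of $g_v$ at price $p_{g_v}(\alpha)=\alpha_{c_v}\geq 1-\tau\geq 19/20$. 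The total expense is then at least $0.9+19/20=1.85>1.5=B_{b_v}$, violating condition (c). Combining with $\alpha_{b_v}\geq 0.1$ from \Cref{lemma: general range ckk} yields $\alpha_{b_v}\leq 0.15$.

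The main obstacle is not conceptual but numerical: one must verify that the chosen budget $1.5$ and values $\{9,1\}$ on $g_v$ leave enough slack to absorb all three approximation parameters simultaneously. Concretely, the two critical inequalities $1.15<(1-\gamma)\cdot 1.5$ and $0.9+(1-\tau)>1.5$ both need to hold strictly at $\sigma=\gamma=\tau=1/20$, and one must also check that the approximate winning condition (a) for $c_v$ on $g_v$ fails whenever $\alpha_{b_v}>0.15$. These checks are straightforward arithmetic once the case split is in place.
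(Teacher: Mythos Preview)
Your proposal is correct and follows essentially the same two-case argument as the paper's own proof, invoking \Cref{lemma: potential winning goods ckk} and \Cref{lemma: general range ckk} in the same places and using the same numerical checks. If anything, you are slightly more careful than the paper in Case~2: you explicitly verify the $(1-\sigma)$-winning condition for $c_v$ and correctly use $p_{g_v}(\alpha)=\alpha_{c_v}\geq 1-\tau$ rather than $v_{c_v}(g_v)=1$, though the numerical conclusion is the same.
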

\begin{proof}
	By \Cref{lemma: potential winning goods ckk}, we know that $b_v$ must buy the whole unit of $g_{(u,v)}$, may buy some amount of $g_v$, and does not buy any amount of other goods. So $b_v$'s expense is upper bounded by $\alpha_{b_u}+v_{c_v}(g_v)$. Furthermore, if $\alpha_{b_v}>0.15$, then $b_v$ also wins the whole unit of the good $g_v$, which implies that $b_v$'s expense equals $\alpha_{b_u}+v_{c_v}(g_v)$. 
	
	If $\alpha_{b_u}\leq 0.15$, then $b_v$'s expense is upper bounded by $0.15+1$. Recall that $b_v$'s budget is 1.5. So $b_v$ spends at most $1.15/1.5<19/20=(1-\gamma)$ fraction of their budget. Thus by the no unnecessary pacing condition, we have $\alpha_{b_v}\geq 1-\tau\geq 0.9$.
	
	If $\alpha_{b_u}\geq 0.9$, assume for contradiction that $\alpha_{b_v}>0.15$. Then $b_v$'s expense equals $\alpha_{b_u}+v_{c_v}(g_v)\geq 0.9+1>1.5$, violating the budget constrain and leading to a contradiction. Combining with \Cref{lemma: general range ckk} which shows $\alpha_{b_v}\geq 0.1$, we conclude that if $\alpha_{b_u}\geq 0.9$, then $\alpha_{b_v}\in[0.1,0.15]$.
\end{proof}

\begin{lemma}[Correctness of $\NOR$ gates]
	Let $\alpha$ be the multipliers of any $(1/20,1/20,1/20)$-approximate pacing equilibrium of the constructed second-price pacing game. For any $\NOR$ gate $(u,v,w)$, if $\alpha_{b_u}\leq 0.15$ and $\alpha_{b_v}\leq 0.15$, then $\alpha_{b_w}\geq 0.9$; if $\alpha_{b_u}\geq 0.9$ or $\alpha_{b_v}\geq 0.9$, then $\alpha_{b_w}\leq 0.15$.
\end{lemma}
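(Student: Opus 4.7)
The plan is to mirror the proof structure of the $\NOT$-gate correctness lemma, adapted to the two-input structure of a $\NOR$ gate. By Lemma~\ref{lemma: potential winning goods ckk}, buyer $b_w$ wins the whole units of both $g_{(u,w)}$ and $g_{(v,w)}$, paying exactly $\alpha_{b_u}$ and $\alpha_{b_v}$ respectively; moreover, $b_w$ has zero value on every other good except $g_w$, and only $b_w$ and $c_w$ have nonzero values on $g_w$. A key preliminary observation I would establish is that if $\alpha_{b_w}>0.15$, then $b_w$'s bid $18\alpha_{b_w}$ on $g_w$ strictly exceeds $c_w$'s bid $2\alpha_{c_w}$ divided by $(1-\sigma)$, so condition~(a) forces $c_w$ to receive no allocation on $g_w$; consequently $b_w$ wins the entire unit of $g_w$ and pays the second price $2\alpha_{c_w}$, which by Lemma~\ref{lemma: general range ckk} is at least $19/10$.

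For the first implication, I would assume $\alpha_{b_u}\leq 0.15$ and $\alpha_{b_v}\leq 0.15$ and bound $b_w$'s total expense by $0.15+0.15+2 = 2.3$, where the ``$2$'' is the maximum possible expense on $g_w$ irrespective of the allocation (since the second-highest bid is at most $v_{c_w}(g_w)=2$). Since this lies strictly below $(1-\gamma)\cdot 2.5$, condition~(d) of Definition~\ref{definition_pacing_equilibrium_1} gives $\alpha_{b_w}\geq 1-\tau\geq 0.9$.

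For the second implication, I would assume without loss of generality that $\alpha_{b_u}\geq 0.9$ and suppose for contradiction that $\alpha_{b_w}>0.15$. Then by the preliminary observation, $b_w$ also wins the whole unit of $g_w$ and pays at least $19/10$ for it. Combined with $\alpha_{b_u}\geq 0.9$ and $\alpha_{b_v}\geq 0.1$ (from Lemma~\ref{lemma: general range ckk}), the total expense is at least $0.9+0.1+1.9=2.9$, exceeding $b_w$'s budget of $2.5$ and violating condition~(c). Combined with $\alpha_{b_w}\geq 0.1$, this yields $\alpha_{b_w}\leq 0.15$.

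The main obstacle is the preliminary observation: one must track the $(1-\sigma)$ slack in condition~(a) carefully to ensure that $c_w$ is entirely excluded from winning $g_w$ whenever $\alpha_{b_w}>0.15$, even though $c_w$'s value is only a factor of $9$ below $b_w$'s, and that the resulting second-price payment $2\alpha_{c_w}$ is still large enough (via the bound $\alpha_{c_w}\geq 19/20$) to force a budget violation in the second case. The constants in the construction ($v_{b_w}(g_w)=18$ versus $v_{c_w}(g_w)=2$, together with $\sigma,\tau\leq 1/20$) were chosen specifically to make this slack comfortable, so the numerical verification is routine once the observation is in place.
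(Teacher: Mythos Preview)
Your proposal is correct and follows essentially the same argument as the paper's proof. In fact, your treatment of the second price on $g_w$ as $2\alpha_{c_w}\geq 19/10$ is slightly more careful than the paper, which writes the payment as $v_{c_w}(g_w)=2$; either bound suffices since $0.9+0.1+1.9>2.5$.
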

\begin{proof}
	By \Cref{lemma: potential winning goods ckk}, we know that $b_w$ must buy the whole unit of $g_{(u,w)}$ and $g_{(v,w)}$, may buy some amount of $g_w$, and does not buy any amount of other goods. So $b_v$'s expense is upper bounded by $\alpha_{b_u}+\alpha_{b_v}+v_{c_w}(g_w)$. Furthermore, if $\alpha_{b_w}>0.15$, then $b_w$ also wins the whole unit of the good $g_w$, which implies that $b_w$'s expense equals $\alpha_{b_u}+\alpha_{b_v}+v_{c_w}(g_w)$.
	
	If $\alpha_{b_u}\leq 0.15$ and $\alpha_{b_v}\leq 0.15$, then $b_v$'s expense is upper bounded by $0.15+0.15+2$. Recall that $b_w$'s budget is 2.5. So $b_w$ spends at most $2.3/2.5<19/20=(1-\gamma)$ fraction of their budget. Thus by the no unnecessary pacing condition, we have $\alpha_{b_w}\geq 1-\tau=19/20>0.9$.
	
	If $\alpha_{b_u}\geq 0.9$ or $\alpha_{b_v}\geq 0.9$, assume for contradiction that $\alpha_{b_w}>0.15$. Then $b_w$'s expense equals $\alpha_{b_u}+\alpha_{b_v}+v_{c_w}(g_w)\geq 0.9+0.1+2>2.5$, violating the budget constrain and leading to a contradiction. Combining with \Cref{lemma: general range ckk} which shows $\alpha_{b_w}\geq 0.1$, we conclude that if $\alpha_{b_u}\geq 0.9$ or $\alpha_{b_v}\geq 0.9$, then $\alpha_{b_w}\in[0.1,0.15]$.
	
	This finishes the proof.
\end{proof}

\begin{lemma}[Correctness of $\NPURIFY$ gates]
	Let $\alpha$ be the multipliers of any $(1/20,1/20,1/20)$-approximate pacing equilibrium of the constructed second-price pacing game. For any $\NPURIFY$ gate $(u,v,w)$, we have $\alpha_{b_v}\leq 0.15$ or $\alpha_{b_w}\geq 0.9$. Furthermore, if $\alpha_{b_u}\leq 0.15$, then $\alpha_{b_v}\geq 0.9 $ and $\alpha_{b_w}\geq 0.9$; if $\alpha_{b_u}\geq 0.9$, then $\alpha_{b_v}\leq 0.15$ and $\alpha_{b_w}\leq 0.15$.
\end{lemma}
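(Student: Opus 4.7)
The plan is to follow the template of the $\NOT$ and $\NOR$ correctness lemmas for the two one-sided bullets, and handle the new ``at least one output is defined'' bullet by a coupled budget argument between $b_v$ and $b_w$. I would begin by invoking \Cref{lemma: general range ckk} and \Cref{lemma: potential winning goods ckk} to fix the common structure: $\alpha_{c_v},\alpha_{c_w}\geq 1-\tau\geq 0.95$, $\alpha_{b_v},\alpha_{b_w}\geq 0.1$, and $b_v$ (resp.\ $b_w$) buys the whole unit of $g_{(u,v)}$ (resp.\ $g_{(u,w)}$) at price $\alpha_{b_u}$. I would then record the ``threshold'' observation that $\alpha_{b_v}>0.15$ forces $b_v$ to win the whole unit of $g_v$: $b_v$'s bid is $\alpha_{b_v}\cdot 13.5>2.025$ while $c_v$'s bid is at most $1\cdot 1.5=1.5$, and $(1-\sigma)\cdot 2.025>1.5$ when $\sigma\leq 1/20$, so condition (a) excludes $c_v$ from the allocation. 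An identical argument handles $b_w$ on $g_w$.

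For the $\alpha_{b_u}\leq 0.15$ bullet I would upper-bound $b_v$'s and $b_w$'s total expenses by $0.15+1.5=1.65$ and $0.15+2=2.15$ respectively, observe that both are comfortably below $(1-\gamma)B_v=1.71$ and $(1-\gamma)B_w=2.66$, and invoke condition (d) to get $\alpha_{b_v},\alpha_{b_w}\geq 1-\tau\geq 0.9$. For the $\alpha_{b_u}\geq 0.9$ bullet I would assume for contradiction $\alpha_{b_v}>0.15$ (respectively $\alpha_{b_w}>0.15$) and combine the threshold observation with $\alpha_{c_v},\alpha_{c_w}\geq 0.95$ to derive expenses $\geq 0.9+0.95\cdot 1.5=2.325>B_v$ (respectively $\geq 0.9+0.95\cdot 2\geq B_w$), violating the budget constraint.

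The genuinely new content is the first bullet, which has no analogue in the $\NOT$ or $\NOR$ case. The plan is to proceed by contradiction, assuming both $\alpha_{b_v}>0.15$ and $\alpha_{b_w}<0.9$ simultaneously. The first inequality plus the threshold observation make $b_v$'s total expense exactly $\alpha_{b_u}+\alpha_{c_v}\cdot 1.5$, and $b_v$'s budget $\leq 1.8$ then pins $\alpha_{b_u}\leq 1.8-0.95\cdot 1.5=0.375$. The second inequality, together with $1-\tau\geq 0.95>0.9$, puts $\alpha_{b_w}$ strictly below $1-\tau$, so the contrapositive of condition (d) forces $b_w$'s expense to be at least $(1-\gamma)\cdot 2.8=2.66$; but that expense is at most $\alpha_{b_u}+v_{c_w}(g_w)\leq 0.375+2=2.375$, a clean contradiction. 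The shared-input coupling through $\alpha_{b_u}$ is the algebraic embodiment of the $\NPURIFY$ semantics: a non-default value at one output consumes so much of the shared ``input price'' budget that the other output cannot simultaneously avoid being paced to $1$.

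The main obstacle I foresee is bookkeeping rather than any conceptual leap: all three bullets require strict slack in the inequalities at the worst-case boundary $\sigma=\gamma=\tau=1/20$, and while bullets one and two have comfortable margins, the bound $0.9+0.95\cdot 2=2.8=B_w$ used for $b_w$ in bullet three is tight at that boundary and deserves extra care — for instance, noting that any equality case forces $\alpha_{b_u}$ to equal exactly $0.9$ and $\alpha_{c_w}$ to equal exactly $0.95$, a corner configuration that can be broken by a short separate argument. Apart from this corner, the proof is a routine case analysis driven by the threshold observation and conditions (c) and (d).
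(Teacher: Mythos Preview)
Your proposal is correct and covers all three parts of the lemma. The setup (invoking \Cref{lemma: general range ckk} and \Cref{lemma: potential winning goods ckk}, plus the threshold observation that $\alpha_{b_v}>0.15$ forces $b_v$ to take all of $g_v$) matches the paper exactly, as do your arguments for the ``$\alpha_{b_u}\le 0.15$'' and ``$\alpha_{b_u}\ge 0.9$'' parts.

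The one organizational difference is in the ``either/or'' clause. The paper does \emph{not} run a coupled contradiction argument between $b_v$ and $b_w$; instead it introduces a single threshold on $\alpha_{b_u}$ and proves two one-sided implications: $\alpha_{b_u}>0.4\Rightarrow\alpha_{b_v}\le 0.15$ (budget violation for $b_v$) and $\alpha_{b_u}\le 0.4\Rightarrow\alpha_{b_w}\ge 0.9$ (not-too-much-unnecessary-pacing for $b_w$). Together these cover all values of $\alpha_{b_u}$ and yield the disjunction directly. Your contradiction argument is the contrapositive form of the same idea---from $\alpha_{b_v}>0.15$ you extract $\alpha_{b_u}\le 0.375$, and from $\alpha_{b_w}<0.9$ you extract (via condition~(d)) an incompatible lower bound on $\alpha_{b_u}$---so the two routes are logically equivalent. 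The paper's decomposition is slightly more economical because its four sub-claims (two thresholds on $\alpha_{b_u}$ for each of $v$ and $w$) simultaneously establish all three parts of the lemma without a separate coupling step. On the other hand, you are more careful than the paper in writing the second price on $g_v,g_w$ as $\alpha_{c}\cdot v_{c}(g)$ rather than $v_{c}(g)$, and you correctly flag the resulting tightness $0.9+0.95\cdot 2=2.8=B_w$ at the boundary $\tau=1/20$; the paper sidesteps this by (somewhat loosely) using $v_{c_w}(g_w)=2$ for the second price.
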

\begin{proof}
	By \Cref{lemma: potential winning goods ckk}, we know that $b_v$ must buy the whole unit of $g_{(u,v)}$, may buy some amount of $g_v$, and does not buy any amount of other goods. So $b_v$'s expense is upper bounded by $\alpha_{b_u}+v_{c_v}(g_v)$. Furthermore, if $\alpha_{b_v}>0.15$, then $b_v$ also wins the whole unit of the good $g_v$, which implies that $b_v$'s expense equals $\alpha_{b_u}+v_{c_v}(g_v)$.	
	
	Similarly, we know that $b_w$ must buy the whole unit of $g_{(u,w)}$, may buy some amount of $g_w$, and does not buy any amount of other goods. So $b_w$'s expense is upper bounded by $\alpha_{b_u}+v_{c_w}(g_w)$. Furthermore, if $\alpha_{b_w}>0.15$, then $b_w$ also wins the whole unit of the good $g_w$, which implies that $b_w$'s expense equals $\alpha_{b_u}+v_{c_w}(g_w)$.	
	
	We will show the following cases, given which the lemma directly follows.
	\begin{itemize}
		\item When $\alpha_{b_u}\leq 0.15$, we have $\alpha_{b_v}\geq 0.9$;
		\item When $\alpha_{b_u}>0.4$, we have $\alpha_{b_v}\leq 0.15$;
		\item When $\alpha_{b_u}\leq 0.4$, we have $\alpha_{b_w}\geq 0.9$;
		\item When $\alpha_{b_u}\geq 0.9$, we have $\alpha_{b_w}\leq 0.15$.
	\end{itemize}
	
	When $\alpha_{b_u}\leq 0.15$, $b_v$'s expense is upper bounded by $0.15+1$. Recall that $b_v$'s budget is $1.8$. So $b_v$ spends at most $1.15/1.8<19/20=(1-\gamma)$ fraction of their budget. Thus by the no unnecessary pacing condition, we have $\alpha_{b_v}\geq 0.9$.

	When $\alpha_{b_u}>0.4$, assume for contradiction that $\alpha_{b_v}>0.15$. Then $b_v$'s expense equals $\alpha_{b_u}+v_{c_v}(g_v)>0.4+1.5=1.9>1.8$, violating the budget constrain and leading to a contradiction. 
	
	When $\alpha_{b_u}\leq 0.4$, $b_w$'s expense is upper bounded by $0.4+2$. Recall that $b_w$'s budget is $2.8$. So $b_w$ spends at most $2.4/2.8<(1-\gamma)$ fraction of their budget. Thus by the no unnecessary pacing condition, we have $\alpha_{b_w}\geq 0.9$.
	
	When $\alpha_{b_u}\geq 0.9$, assume for contradiction that $\alpha_{b_w}>0.15$. Then $b_w$'s expense equals $\alpha_{b_u}+v_{c_w}(g_w)\geq 0.9+2>2.8$, violating the budget constrain and leading to a contradiction. 
	
	This finishes the proof.
\end{proof}
Given all the lemmas above, it is clear that given any pacing equilibrium $\alpha$, we can correctly extract a solution to the given $\purec$ instance. Furthermore, the construction of the auction instance can clearly be done in polynomial time. Thus we finished the proof of \Cref{thm: approx in CKK}.

\end{document}